\definecolor{bleu_sombre}{rgb}{0,0,0.6}  \definecolor{rouge_sombre}{rgb}{0.8,0,0}\definecolor{vert_sombre}{rgb}{0,0.6,0}
\theoremstyle{plain}
\newtheorem{theorem}{{Theorem}}[section] 
\newtheorem*{theorem*}{{Theorem}}
\newtheorem{proposition}[theorem]{Proposition}
\newtheorem*{proposition*}{Proposition}
\newtheorem*{corollary*}{Corollary}
\newtheorem{lemma}[theorem]{Lemma}
\newtheorem*{lemma*}{Lemma}
\theoremstyle{definition}
\newtheorem*{definition*}{Définition}
\theoremstyle{remark}
\newtheorem*{remarque*}{Remarque}
\newtheorem*{exemple*}{Exemple}
\newtheorem*{exemples*}{Exemples}
\newcommand{\commm}[1]{{}}
\newcommand{\commperenne}[1]
\renewcommand{\leq}{\leqslant}	\renewcommand{\geq}{\geqslant}
\renewcommand{\bar}[1]{\overline{#1}}
\renewcommand\over[2]{{\,\buildrel #1\over#2\,}}
\newcommand{\inv}{^{-1}}
\newcommand {\limt}[2]{\xrightarrow[#1 \to #2]{}}
\newcommand{\abs}[1]{\left\vert #1\right\vert}        
\newcommand{\nr}[1]{\left\Vert #1\right\Vert}         
\newcommand{\innp}[2]{\left< #1 , #2 \right>}         
\newcommand{\Op}{{\mathop{\rm{Op}}}_h}
\newcommand{\Opw}{{\mathop{\rm{Op}}}_h^w}
\newcommand{\pppg}[1] {\left< #1 \right>} 	
\newcommand{\symbor}{C_b^\infty}		
\newcommand{\symb} {\Sc}		
\newcommand{\bigo}[2]{\mathop{O}\limits_{#1 \to #2}}
\newcommand{\littleo}[2]{\mathop{o}\limits_{#1 \to #2}}
\newcommand{\singl}[1]{\left\{ #1 \right\}}		
\newcommand{\Ii}[2]{\llbracket #1,#2 \rrbracket}	
\newcommand{\R}{\mathbb{R}}		\newcommand{\C}{\mathbb{C}}
\newcommand{\N}{\mathbb{N}}
\newcommand{\st}{\,:\,}					
\newcommand{\seq}[2]{\left({#1}_{#2}\right)_{#2 \in\N}} 
\renewcommand{\Re}{\mathop{\rm{Re}}\nolimits}        
\renewcommand{\Im}{\mathop{\rm{Im}}\nolimits}        
\DeclareMathOperator{\Tr}{Tr}                        
\DeclareMathOperator{\supp}{supp}                    
\newcommand{\impl}{\Longrightarrow}                  
\renewcommand{\a}{\alpha}\renewcommand{\b}{\beta}\newcommand{\G}{\Gamma}\renewcommand{\d}{\delta}\newcommand{\D}{\Delta}\newcommand{\e}{\varepsilon}\newcommand{\z}{\zeta} \renewcommand{\th}{\theta}\newcommand{\Th}{\Theta}\renewcommand{\k}{\kappa}\renewcommand{\l}{\lambda}\newcommand{\m}{\mu}\newcommand{\n}{\nu}\newcommand{\x}{\xi}\newcommand{\s}{\sigma}\renewcommand{\t}{\tau}\newcommand{\vf}{\phi}\newcommand{\h}{\chi}\renewcommand{\o}{\omega}\renewcommand{\O}{\Omega}
\newcommand{\Lc}{{\mathcal L}}\newcommand{\Rc}{{\mathcal R}}\newcommand{\Sc}{{\mathcal S}}\newcommand{\Uc}{{\mathcal U}}\newcommand{\Vc}{{\mathcal V}}\newcommand{\Xc}{{\mathcal X}}
\newcommand{\hh}{H_h}   
\newcommand{\uh}{U_h}    \newcommand{\uhe}{U_h^E}
\newcommand{\zoneS}{\mathcal Z}
\newcommand{\negg}{{N_E \G}}
\newcommand{\nergg}{{N_E^R \G}}
\newcommand{\sgamma}{{\s_\G}}
\newcommand{\snegg}{{\s_{\negg}}}
\newcommand{\Opwm}{{\mathop{\rm{Op}}}_{h_m}^w}
\newcommand{\sigman} {\s_1}
\DeclareMathOperator{\II}{II}
\newcommand{\uhtr} {u_h^{T,R}}
\newcommand{\uhor} {u_h^{0,R}}
\newcommand{\uhr} {\tilde u_h^{R}}
\newcommand{\SR}{S^R_h}
\newcommand{\divg}{\mathop{\rm{div}}\nolimits}
\newcounter{stepproof}
\newcommand{\stepp}{\stepcounter{stepproof} \noindent {\bf $\bullet$}\quad }
\newcommand{\vfield}{\mathcal X}
\begin{document}

\title[Helmholtz equation with unbounded source]{Semiclassical measure for the solution of the Helmholtz equation with an unbounded source}
\author{Julien Royer}

\subjclass[2010]{35J10, 47A55, 47B44, 47G30}
\keywords{Non-selfadjoint operators, resolvent estimates, Helmholtz equation, semiclassical measures.}

\begin{abstract}
We study the high frequency limit for the dissipative Helmholtz equation when the source term concentrates on a submanifold of $\R^n$. We prove that the solution has a unique semi-classical measure, which is precisely described in terms of the classical properties of the problem. This result is already known when the micro-support of the source is bounded, we now consider the general case.
\end{abstract}

\maketitle

\vspace{2cm}

\section{Statement of the result}

We consider on $\R^n$ the Helmholtz equation
\begin{equation} \label{helmholtz}
(\hh-E_h) u_h = S_h, \quad \text{where } \hh = -h^2 \D + V_1(x) -ih V_2(x).
\end{equation}
Here $V_1$ and $V_2$ are smooth and real-valued potentials which go to 0 at infinity. Thus for any $h \in ]0,1]$ the operator $\hh$ is a non-symmetric (unless $V_2 = 0$) Schrödinger operator with domain $H^2(\R^n)$. The energy $E_h$ will be chosen in such a way that for $h >0$ small enough, $\d > \frac 12$ and $S_h \in L^{2,\d}(\R^n)$ the equation \eqref{helmholtz} has a unique outgoing solution $u_h \in L^{2,-\d}(\R^n)$. Here we denote by $L^{2,\d}(\R^n)$ the weighted space $L^2 \big(\pppg x ^{2\d} dx \big)$, where $\pppg x = \big ( 1 + \abs x ^2 \big)^{\frac 12}$.\\

The source term $S_h$ we consider is a profile which concentrates on a submanifold $\G$ of dimension $d \in \Ii 0 {n-1}$ in $\R^n$, endowed with the Lebesgue measure $\sgamma$. Given an amplitude $A \in C^\infty(\G)$ and $S$ in the Schwartz space $\Sc(\R^n)$ we set, for $ h \in ]0,1]$ and $x \in \R^n$:
\begin{equation} \label{def-source}
S_h(x) = h^{\frac {1-n-d}2}  \int_{\G} A(z) S\left( \frac {x-z} h\right) \, d\sgamma(z)
\end{equation}
(this definition will make sense with the assumptions on $\G$ and $A$ given below).
Our purpose is to study the semiclassical measures for the family of corresponding solutions $(u_h)$ when the submanifold $\G$ is allowed to be unbounded.\\

This work comes after a number of contributions which deal with more and more general situations. The first paper about the subject is \cite{benamouckp02}, where $\G = \singl 0$ (see also \cite{castella05}). The result was generalized in \cite{castellapr02} to the case where $\G$ is an affine subspace of $\R^n$, under the assumption that the refraction index is constant ($V_1 = 0$). This restriction was overcome in \cite{wangz06}. In \cite{fouassier06} the source term concentrates on two points and in \cite{fouassier06} the refraction index is discontinuous along an hyperplane of $\R^n$. All these papers study the semiclassical measure of the solution using its Wigner transform.

The approach in \cite{bony09} is different. The semiclassical measures are defined with pseudo-differential calculus (see \eqref{def-mesure}) and the resolvent is replaced by the integral over positive times of the propagator (as in \cite{castella05}). We used this point of view in \cite{art-mesure} to deal with the case of a non-constant absorption index ($V_2 \neq 0$, $V_2 \geq 0$) and a general bounded submanifold $\G$. We also considered in \cite{art-nondiss} an absorption index $V_2$ which can take non-positive values. The purpose of this paper is now to allow a general unbounded submanifold $\G$.\\

Let us now state more precisely the assumptions. The potentials $V_1$ and $V_2$ are respectively of long and of short range: there exist constants $\rho > 0$ and $c_\a$ for $\a \in\N^n$ such that 
\begin{equation} \label{hyp-decV}
\forall \a \in \N^n, \forall x \in \R^n, \quad \abs {\partial^\a V_1(x)} \leq c_\a \pppg x^{-\rho -\abs \a} \quad \text{and} \quad \abs {\partial^\a V_2(x)} \leq c_\a \pppg x^{-1 -\rho -\abs \a}.
\end{equation}

Then we introduce the hamiltonian flow $\vf^t$ corresponding to the classical symbol $p :(x,\x) \mapsto \abs \x^2 + V_1(x)$ on the phase space $\R^{2n}$. For all $w \in \R^{2n}$, $t \mapsto \vf^t(w) = (X (t,w), \Xi(t,w)) \in \R^{2n}$ is the solution of the system
\begin{equation} \label{syst-ham} 
\left\{ \begin{array}{l} \partial _t X (t,w) = 2 \Xi (t,w) ,  \\
        \partial_ t \Xi (t,w) = -  \nabla V_1 (X(t,w)), \\
	\vf^0(w) = w.
        \end{array} \right.
\end{equation}
For $I \subset \R$ we set 
\begin{eqnarray*}
\O_b(I) &=& \singl{ w \in p \inv(I) \st \sup_{t \in \R} \abs{X(t,w)} < \infty    }
%
\end{eqnarray*}
We also denote by 
\[
 H_p q = \{ p,q \} = \nabla_\x p \cdot \nabla_x q - \nabla _x p \cdot \nabla _\x q 
\]
the Poisson bracket of $p$ with a symbol $q \in C^\infty(\R^{2n})$.\\

We now consider an energy $E_0 > 0$ such that
\begin{equation} \label{hyp-amfaible}
\forall w \in \O_b(\singl {E_0}), \exists T > 0 , \quad \int_0^T V_2(X(t,w)) \, dt > 0.
\end{equation}
Let $\d > \frac 12$. We know (see \cite{art-nondiss}) that under Assumption \eqref{hyp-amfaible} there exist an open neighborhood $J$ of $E_0$, $h_0 > 0$ and $c \geq 0$ such that for 
\[
z \in \C_{J,+} = \singl{z \in \C \st \Re z \in J, \Im z > 0}
\]
and $h \in ] 0,h_0]$ the operator $(\hh -z)$ has a bounded inverse on $L^2(\R^n)$ and 
\[
 \nr{\pppg x ^{-\d} (\hh-z)\inv \pppg x^{-\d}}_{\Lc(L^2(\R^n))} \leq \frac c h.
\]
Here $\Lc(L^2(\R^n))$ denotes the space of bounded operators on $L^2(\R^n)$. Moreover for any $\l \in J$ the limit
\[
 (\hh-(\l +i0))\inv = \lim _{\b \to 0^+} (\hh-(\l +i\b))\inv 
\]
exists in $\Lc(L^{2,\d}(\R^n),L^{2,-\d}(\R^n))$.\\

Now let us be more explicit about the source term $S_h$ we consider. We recall that $\G$ is a submanifold of dimension $d \in \Ii 0 {n-1}$ in $\R^n$, endowed with the Riemannian structure given by the restriction of the usual structure on $\R^n$, and the corresponding Lebesgue measure $\sgamma$. We assume that there exist $R _ 1 >0$ and $\s_1 \in ]0,1[$ such that
\begin{equation} \label{hyp-non-incoming}
N\G \cap \zoneS_-(R_1 , 0 , -\s_1) = \emptyset,
\end{equation}
where $N\G = \singl{(z,\x) \in \G \times \R^n \st \x \bot T_z \G}$ is the normal bundle of $\G$ and $\zoneS_-(R_1, 0 , \s_1)$ is an incoming region: for $R \geq 0$, $\n \geq 0$ and $\s \in [-1,1]$ we set
\[
\zoneS_\pm(R,\n,\s)  = \singl{ (x,\x)\in\R^{2n} \st \abs x \geq R,   \abs \x  \geq \n  \text{ and } \innp x \x \gtreqless \s \abs x \abs \x} .
\]
Note that Assumption \eqref{hyp-non-incoming} is satisfied for any bounded submanifold of $\R^n$. When $d = 0$, it actually implies that $\G$ is bounded, but this is not the case in higher dimension: this assumption holds for instance for any affine subspace of dimension $d \in \Ii 1 {n-1}$ in $\R^n$. Now that $\s_1$ is fixed, we can choose a smaller neighborhood $J$ of $E_0$ and assume that
\begin{equation} \label{hyp-J}
J \subset [E_1,E_2], \quad \text{were} \quad   E_1 > 0 \quad \text{and} \quad \left( \frac {1 + \s_1}2\right)^2 E_2 < E_1.
\end{equation}\\

We assume that 
\begin{equation} \label{hyp-propag}
 \forall z \in \G , \quad V_1(z) > E_0,
\end{equation}
and we define
\[
\negg = N\G \cap p\inv(\singl{E_0}).
\]
$\negg$ is a submanifold of dimension $(n-1)$ in $\R^{2n}$, endowed with the Riemannian structure defined as follows: for $(z,\x) \in \negg$ and $(Z,\Xi),(\tilde Z,\tilde \Xi) \in T_{(z,\x)}\negg \subset \R^{2n}$ we set
\[
 g_{(z,\x)} \big((Z,\Xi),(\tilde Z,\tilde \Xi) \big) = \innp{Z}{\tilde Z}_{\R^n} + \innp{ \Xi_\bot}{\tilde \Xi_\bot}_{\R^n},
\]
where $\Xi_\bot,\tilde \Xi_\bot$ are the orthogonal projections of $\Xi,\tilde \Xi \in \R^n$ on $(T_z\G \oplus \R \x)^\bot$ (see the discussion in \cite{art-mesure}).
We denote by $\snegg$ the canonical measure on $\negg$ given by $g$, and assume that
\begin{equation} \label{hyp-Phi}
 \snegg \left( \singl{ (z,\x) \in \negg \st \exists  t > 0 , \vf^t (z,\x) \in \negg} \right) = 0.
\end{equation}

We now introduce the amplitude $A \in C^\infty(\G)$. We assume that there exist $\d > \frac 12$ and $c \geq 0$ such that 
\begin{subequations} \label{hyp-decA}
\begin{equation} \label{hyp-decAb} 
\int_{\G}\pppg z ^{\d} \big( \abs {A(z)} + \nr{d_z A} + \abs {A(z)} \nr{\II_z}\big) \, d\sgamma (z) < +\infty .
\end{equation}
Moreover for all $r \in ]0,1]$ and $x \in \R^n$ we have
\begin{equation} \label{hyp-decAa} 
 \int_{B(x,r)\cap \G} \pppg z ^{\d} \big( \abs{A(z)} + \nr{d_z A}  + \abs {A(z)} \nr{\II_z}\big) \, d\sgamma (z) \leq c r^d.
\end{equation}
\end{subequations}
Here $B(x,r)$ is the ball of radius $r$ and centered at $x$, $d_z A : T_z\G \to \R$ is the differential of $A$ at point $z$ and $\II$ is the second fundamental form of the submanifold $\G$. For any $z \in \G$, $\II_z$ is a bilinear form from $T_z\G$ to $N_z\G$ (see Appendix \ref{sec-II}), and
\[
 \nr{\II_z} =\sup_{\nr{X}_{T_z\G} = \nr{Y}_{T_z\G} = 1} \nr{\II_z(X,Y)}_{N_z\G}.
\]
Note that all these estimates hold when $A \in C_0^\infty(\G)$. 
Here $A$ is allowed to have a non-compact support, but it still has to stay away from the bundary of $\G$:
\begin{equation} \label{hyp-Abord}
\forall \th \in C_0^\infty(\R^n), \quad z \mapsto A(z) \th(z) \in C_0^\infty(\G).
\end{equation}
Then it remains to consider $S \in \Sc(\R^n)$ and define the source term $S_h$ by \eqref{def-source}.\\ \commperenne{On ne définit pas $\G_0$ et $\negg_0$.}

Let $(E_h) _{h \in ]0,h_0]}$ be a family of energy in $\C_{J,+} \cup J$ such that 
\begin{equation} \label{hyp-Eh}
 E_h = E_0 + h \tilde E +\littleo h 0 (h)
\end{equation}
for some $\tilde E \in \bar {\C_+}$. Since for all $h \in ]0,h_0]$ the source term $S_h$ given by \eqref{def-source} belongs to $L^{2,\d}(\R^n)$ (see Proposition \ref{prop-norme-S}) we can define
\begin{equation} \label{def-uh}
u_h = (\hh-(E_h +i0))\inv S_h \in L^{2,-\d}(\R^n).
\end{equation}
Here $(\hh-(E_h +i0))\inv$ stands for $(\hh-E_h)\inv$ when $E_h \in \C_{J,+}$.\\

Our purpose is to study the semiclassical measures for this family $(u_h)_{h\in]0,h_0]}$. In other words, non-negative measures $\m$ on the phase space $\R^{2n} \simeq T^* \R^n$ such that 
\begin{equation} \label{def-mesure}
\forall q \in C_0^\infty(\R^{2n}), \quad \innp{\Opwm(q) u_{h_m}}{u_{h_m}} \limt m {+\infty} \int _{\R^{2n}} q \, d\m,
\end{equation}
for some sequence $\seq h m \in ]0,h_0]^\N$ such that $h_m \to 0$ (see \cite{gerard91}). Here $\Opw(q)$ denotes the Weyl quantization of the symbol $q$:
\[
 \Opw(q) u (x) = \frac 1{(2\pi h)^n} \int_{\R^n} \int_{\R^n} e^{\frac ih \innp{x-y} \x} q \Big( \frac {x+y} 2 , \x \Big) u(y) \, dy \, d\x.
\]
We will also use the standard quantization:
\[
 \Op(q) u (x) = \frac 1{(2\pi h)^n} \int_{\R^n} \int_{\R^n} e^{\frac ih \innp{x-y} \x} q (x,\x) u(y) \, dy \, d\x.
\]
We denote by $\symbor(\R^{2n})$ the set of smooth symbols whose derivatives are bounded. For $\d \in \R$, we also denote by $\Sc\big(\pppg x ^\d \big)$ the set of symbols $a \in C^\infty(\R^{2n})$ such that
\[
 \forall \a,\b\in\N^n, \exists c_{\a,\b} \geq 0, \forall (x,\x) \in \R^{2n} , \quad \abs{\partial_x^\a \partial_\x^\b a(x,\x)} \leq c _{\a,\b} \pppg x^{\d},
\]
and by $\symb_{\d}(\R^{2n})$ the set of symbols $a \in C^\infty(\R^{2n})$ such that
\[
 \forall \a,\b\in\N^n, \exists c_{\a,\b} \geq 0, \forall (x,\x) \in \R^{2n} , \quad \abs{\partial_x^\a \partial_\x^\b a(x,\x)} \leq c _{\a,\b} \pppg x^{\d - \abs \a}.
\]
We can similarly define the sets of symbols $\symb \big( \pppg \x^\d \big)$ for $\d \in \R$. We refer to \cite{robert, zworski} for more details about semiclassical analysis.\\

For $(z,\x) \in \negg$ we set 
\[\k(z,\x) = \pi (2\pi)^{d-n}  \abs{A(z)}^2 \abs \x \inv \big|\hat S (\x) \big| ^2,\]
where $\hat S$ is the Fourier transform of $S$. The theorem we want to prove is the following:

\begin{theorem} \label{th-mesure-nb}
Let $S_h$ and $E_h$ be given by \eqref{def-source} and \eqref{hyp-Eh}, and $u_h$ defined by \eqref{def-uh}. Assume that the assumptions \eqref{hyp-decV}, \eqref{hyp-amfaible}, \eqref{hyp-non-incoming}, \eqref{hyp-J}, \eqref{hyp-propag}, \eqref{hyp-Phi}, \eqref{hyp-decA} and \eqref{hyp-Abord} hold.
\begin{enumerate}[(i)]
\item Then there exists a non-negative Radon measure $\m$ on $\R^{2n}$ such that for all $q \in C_0^\infty(\R^{2n})$ we have
\begin{equation}  \label{lim-opquu}
\innp{\Opw(q) u_h}{u_h} \limt h 0  \int_{\R^{2n}} q \, d\m.
\end{equation}
\item This measure is characterized by the following three properties:
\begin{enumerate} [a.]
\item $\m$ is supported in $p\inv(\singl{E_0})$.
\item For all $\s \in ]\sigman,1[$ there exists $R \geq 0$ such that $\m$ is zero in the incoming region $\zoneS_-(R,0,-\s)$.
\item $\m$ satisfies the following Liouville equation:
\begin{equation*} 
(H_p + 2 \Im \tilde E + 2 V_2) \m = \k  \, \snegg.
\end{equation*}
This means that for all $q \in C_0^\infty(\R^{2n})$ we have
\[
\int_{\R^{2n}} (-H_p + 2 \Im \tilde E + 2 V_2) q \, d\m = \int_{\negg} q(z,\x) \k(z,\x) \, d\snegg(z,\x).
\]
\end{enumerate}
\item These three properties imply that for any $q \in C_0^\infty(\R^{2n})$ we have
\begin{equation} \label{expr-mu}
\int_{\R^{2n}} q \, d\mu = \int_{\R_+} \int_{\negg}  \k(z,\x)  q(\vf^t(z,\x)) e^{-2t\Im \tilde E - 2\int_0^t V_2(X(s,z,\x))\,ds} \,d\snegg(z,\x) \,dt.
\end{equation}
\end{enumerate}
\end{theorem}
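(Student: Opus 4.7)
The plan is to reduce to the bounded case of \cite{art-mesure} by a truncation procedure on the amplitude $A$. For $R > 0$, choose $\chi_R \in C_0^\infty(\R^n)$ with $\chi_R = 1$ on $B(0,R)$; by \eqref{hyp-Abord}, the truncated amplitude $A_R = \chi_R A$ lies in $C_0^\infty(\G)$. Let $S_h^R$ be the source given by \eqref{def-source} with $A$ replaced by $A_R$, and set $u_h^R = (\hh - (E_h + i0))\inv S_h^R$. The main theorem of \cite{art-mesure} applies to $(u_h^R)$ and yields a unique semi-classical measure $\m^R$ satisfying properties (a)--(c) of item (ii) with $\k$ replaced by $\k_R = \pi (2\pi)^{d-n} \abs{A_R}^2 \abs \x \inv \abs{\hat S(\x)}^2$, and given by formula \eqref{expr-mu} with $A_R$ in place of $A$. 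Since $\abs{A_R}^2 \nearrow \abs A^2$ pointwise, monotone convergence combined with \eqref{hyp-decAb} ensures that the right-hand side of \eqref{expr-mu} defines a Radon measure $\m$ on $\R^{2n}$, and $\m^R(q) \to \m(q)$ for every $q \in C_0^\infty(\R^{2n})$.

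The crucial step, and the main obstacle of the proof, is to show that the truncation limit is compatible with the semiclassical limit, that is
\[
\lim_{R \to \infty} \limsup_{h \to 0} \bigl| \innp{\Opw(q) u_h}{u_h} - \innp{\Opw(q) u_h^R}{u_h^R} \bigr| = 0
\]
for every $q \in C_0^\infty(\R^{2n})$. A brute-force estimate applied to $u_h - u_h^R = (\hh - (E_h + i0))\inv (S_h - S_h^R)$ via the uniform resolvent bound fails, because the concentration of $S_h$ on $\G$ forces $\nr{S_h - S_h^R}_{L^{2,\d}}$ to be of order $1$ rather than $o(h)$, so the $h\inv$ loss is not absorbed. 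The right approach is to use the propagator representation of the outgoing resolvent
\[
u_h = \frac{i}{h} \int_0^{+\infty} e^{it(E_h+i0)/h}\, e^{-it\hh/h} S_h \, dt
\]
(as in \cite{bony09, art-mesure}), insert it into $\innp{\Opw(q) u_h}{u_h}$, and propagate $q$ along the Hamiltonian flow $\vf^t$ by Egorov's theorem. This transforms the bilinear form into an integral against $\abs{S_h}^2$-type densities concentrated near $\G$, with oscillations controlled by \eqref{hyp-J} and damping from $V_2$ controlled by \eqref{hyp-amfaible}. The non-incoming assumption \eqref{hyp-non-incoming} together with \eqref{hyp-J} guarantees that a point $z \in \G$ with $\abs z \geq R$ cannot be connected to $\supp q$ by a trajectory of $\vf^t$ in bounded positive time, so only a compact portion of $\G$ contributes to $\innp{\Opw(q) u_h}{u_h}$ up to an error bounded uniformly in $h$ by
\[
\int_{\G \cap \{\abs z \geq R\}} \pppg z ^\d \bigl( \abs{A(z)} + \nr{d_z A} + \abs{A(z)} \nr{\II_z} \bigr) \, d\sgamma(z),
\]
which vanishes as $R \to \infty$ by \eqref{hyp-decAb}.

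Granting this uniform tail bound, \eqref{lim-opquu} follows by a diagonal argument combining the convergence $\innp{\Opw(q) u_h^R}{u_h^R} \to \m^R(q)$ as $h \to 0$ (from the bounded case) and $\m^R(q) \to \m(q)$ as $R \to \infty$, proving (i). Properties (a), (b), and (c) for $\m$ pass to the limit from the corresponding statements for $\m^R$, using that the source $\k_R \snegg$ converges monotonically to $\k \snegg$ against any non-negative test function. For part (iii), the characterization follows by the method of characteristics applied to the Liouville equation (c): integrating against a compactly supported test function $q$ along $\vf^t$ and using (b) to discard the contribution from $t = -\infty$, one finds that any Radon measure satisfying (a)--(c) must equal the right-hand side of \eqref{expr-mu}, which in particular establishes uniqueness. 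Assumption \eqref{hyp-Phi} ensures that almost every $(z,\x) \in \negg$ is not a flow-recurrence point, so the parametrization of $\m$ by $(t,z,\x) \in \R_+ \times \negg$ in \eqref{expr-mu} is essentially injective and the formula is well-defined.
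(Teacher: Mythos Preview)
Your overall strategy matches the paper's: truncate $A$ to $A_R\in C_0^\infty(\G)$, invoke the bounded case for $u_h^R$, and close by showing that the truncation error is negligible in the semiclassical limit. The paper also works with the propagator representation and Egorov's theorem, and your identification of the non-incoming hypothesis \eqref{hyp-non-incoming} as the mechanism preventing far-away parts of $\G$ from contributing to a fixed compact set is correct.

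However, the ``crucial step'' in your sketch has a genuine gap. The error bound you write down,
\[
\int_{\G\cap\{\abs z\geq R\}}\pppg z^\d\bigl(\abs{A}+\nr{dA}+\abs A\nr{\II}\bigr)\,d\sgamma,
\]
is not what controls the difference. The paper does not bound the truncation error by a tail integral vanishing as $R\to\infty$; instead it proves the stronger fact that for each compact $K\subset\R^{2n}$ there is a threshold $R_0=R_0(K)$ such that $\m=\tilde\m_R$ on $K$ for \emph{every} $R\geq R_0$ (Proposition~\ref{prop-lim-mes++}). Getting there requires a \emph{double} truncation, in time $T$ and in space $R$, with an intermediate object $u_h^{T,R}=\frac ih\int_0^\infty\h_{T,R}(t)\uh(t,E_h)S_h^R\,dt$. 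Large times $t\geq T$ are killed by the damping assumption \eqref{hyp-amfaible} via Proposition~\ref{prop-super-egorov}. For finite times $t\leq T$ one must further split phase space near infinity into an outgoing piece, where Egorov's theorem and Proposition~\ref{prop-non-entr} show that trajectories from $\{\abs z\geq R_0\}$ never return to $K$, and an incoming piece, where one needs a microlocal estimate on the source itself: $\nr{\Op(\o_-)S_h}_{L^{2,\d}}=O(h^{3/2})$ for $\o_-$ supported in $\zoneS_-$ (Proposition~\ref{prop-microloc++}). This last estimate is where the decay hypothesis \eqref{hyp-decA} on $A$, $dA$ and $\II$ is actually used, through an integration by parts on $\G$; it is not a tail bound. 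Your sketch collapses the outgoing/incoming decomposition into a single Egorov argument and omits the source microlocal estimate, which is the technical heart of the unbounded extension.

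Your treatment of parts (ii) and (iii) is fine in outline and agrees with the paper: properties (a)--(c) are inherited from the bounded case by localization (once one knows $\m=\tilde\m_{R_0}$ on bounded sets), and the explicit formula \eqref{expr-mu} then follows by the characteristics argument you describe.
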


This result is known when $A$ is compactly supported on $\G$ (see \cite{art-mesure,art-nondiss}). The idea for the proof is to write the resolvent as the integral of the propagator over positive times, and to approximate $u_h$ by a partial solution which only takes into account finite times:
\[
 u_h^T = \frac ih \int_0^\infty \h_T(t) e^{-\frac {it}h (\hh-E_h) } S_h \,dt.
\]
Here $\h_T(t) = \h(t-T)$, where $\h \in C^\infty(\R, [0,1])$ is equal to 1 in a neighborhood of $]-\infty, 0]$ and supported in $]-\infty,\t_0]$ for some well-chosen $\t_0 \in ]0,1]$. Note that for all $h \in ]0,1]$ the semi-group $t \mapsto e^{-\frac {it}h \hh}$ is well-defined for all $t \geq 0$. However this is not a contractions semi-group since $V_2$ is not assumed to be non-negative (see for instance Corollary 3.6 in \cite{engel})\\

The idea will be the same to deal with the case of an amplitude $A$ whose support is not bounded. Let $\Th \in C_0^\infty(\R^n,[0,1])$ be equal to 1 on $B(0,1)$. For any $R > 0$ we set $A_R : z \in \G \mapsto A(z) \Th(z/R)$ and 
\begin{equation} \label{def-shr}
\SR(x) = h^{\frac {1-n-d} 2} \int_{z\in\G} A_R(z) S\left( \frac {x-z}h \right)\, d\sgamma(z).
\end{equation}

Given any $R > 0$, the proof of \cite{art-mesure,art-nondiss} applies for the source term $\SR$. Since the choice of $\h$ mentionned above depends on the support of $A_R$, we denote it by $\h_{0,R}$. Moreover $\h_{0,R}$ can be chosen non-increasing. Then for any $T \geq 0$ we set $\h_{T,R} : t \mapsto \h_{0,R} (t-T)$ and for any $h \in ]0,h_0]$:
\[
\uhtr = \frac ih \int_0^\infty \h_{T,R}(t) e^{-\frac {it}h (\hh-E_h)} \SR \, dt.
\]
The key point is to prove that in some suitable sense $\uhtr$ is a good approximation of $u_h$ for large $T$ and $R$, and $h>0$ small enough.\\

Let $R > 0$. For $h \in ]0,h_0]$ we set
\[
\uhr = (\hh-(E_h+i0))\inv \SR \in L^{2,-\d}(\R^n).
\]
Since $A_R$ is compactly supported on $\G$, we know that Theorem \ref{th-mesure-nb} holds for $\tilde u_h^R$. In particular there exists a non-negative Radon measure $\tilde \m_R$ on $\R^{2n}$ such that 
\[
\forall q \in C_0^\infty(\R^{2n}), \quad \innp{\Opw(q) \uhr}{\uhr} \limt h 0 \int_{\R^{2n}} q \, d \tilde \m_R.
\]
Moreover, according to \eqref{expr-mu}, $\tilde \m_R$ is supported on the classical trajectories coming from $\nergg =\singl{(z,\x) \in \negg \st z \in \supp A_R}$. Let $K$ be a compact subset of $\R^{2n}$. Assumption \eqref{hyp-non-incoming} ensures that for $R_K > 0$ large enough and $R \geq R_K$, the trajectories coming form $\nergg \setminus N_E^{R_K}\G$ do not meet $K$ (see Proposition \ref{prop-non-entr}), and hence $\tilde \m_{R} = \tilde \m_{R_K}$ on $K$. This is the idea we are going to use to prove existence of the semiclassical measure $\m$. And as expected, $\m$ will coincide with $\tilde \m_{R_K}$ on $K$.\\

The plan of this paper is the following. In section \ref{sec-terme-source} we give some estimates for the source term $S_h$, and in Section \ref{sec-approx} we show that $\uhtr$ is a good approximation of $u_h$ in order to prove Theorem \ref{th-mesure-nb}. In Appendix \ref{sec-II} we recall some basic facts about differential geometry, and in particular the second fundamental form which appears when integrating by parts on $\G$.

\section{Estimates of the source term} \label{sec-terme-source}

In this section we prove that $S_h$ and $\SR$ for $R > 0$ are (uniformly) of size $O(\sqrt h)$ in $L^{2,\d}(\R^n)$, where $\d > \frac 12$ is given by \eqref{hyp-decA}. Then we use Assumption \eqref{hyp-non-incoming} to prove that if $\o_- \in \symb_0(\R^{2n})$ is supported in $\zoneS_-(R,0,-\s)$ for some $R > R_1$ and $\s \in ]\s_1,1[$, then $\Opw(\o_-) S_h = O(h^{\frac 32})$ in $L^{2,\d}(\R^n)$. Since we even have an estimate of size $O(h^\infty)$ when $\o_-$ is compactly supported, this proves in particular that $S_h$ is microlocally supported outside an incoming region.

\begin{lemma} \label{estim-sqrt}
Consider $B \in C^\infty(\G)$, a family $(f_h^z)_{z \in\G , h\in]0,1]}$ of functions in $L^{2, 2 + \d + d/2 }(\R^n)$, and assume that for some $C_1 \geq 0$ we have
\begin{equation} \label{hyp-decBb}
\forall h \in ]0,1], \quad \int_{\G}  \pppg z ^\d \abs{B(z)} \left( 1 + \nr {f_h^z}^2_{L^{2, 2 + \d + d/2 }(\R^n)} \right) \,  d\sgamma(z) < C_1
\end{equation}
and 
\begin{equation} \label{hyp-decBa}
\forall x \in \R^n , \forall r \in ]0,1], \quad  \int_{B(x,r) \cap \G}  \pppg z ^\d \abs{B(z)} \, d\sgamma(z) \leq C_1 \, r^d.
\end{equation}
For $h \in ]0,1]$ we consider
\[
\tilde S_h : x \mapsto  h^{\frac {1-n-d}2} \int_{\G} B(z) f_h^z\left( \frac {x-z}h\right) \,d\sgamma (z).
\]
Then $\tilde S _h(x)$ is well-defined for all $h \in ]0,1]$ and $x \in\R^n$, and there exists a constant $C \geq 0$ which only depends on $C_1$ and such that 
\[
\forall h \in ]0,1], \quad \nr{\tilde S_h}_{L^{2,\d}(\R^n)} \leq C \sqrt h.
\]
\end{lemma}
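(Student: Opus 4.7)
The plan is to obtain a pointwise bound on $|\tilde S_h(x)|^2$ by a weighted Cauchy--Schwarz inequality in the variable $z \in \G$, in which a negative power of $\pppg{(x-z)/h}$ will produce a gain of $h^d$ when integrated against $|B|$ via the volume estimate \eqref{hyp-decBa}, while the complementary positive power is absorbed by the weighted $L^2$ decay of $f_h^z$ provided by \eqref{hyp-decBb}. The prefactor $h^{(1-n-d)/2}$ combined with the Jacobian $h^n$ of the rescaling $y = (x-z)/h$ and this gain $h^d$ will give precisely the $\sqrt h$ in the final answer.

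Concretely, I would fix an integer $k$ with $d < k \leq d+4$ (for instance $k = d+1$) and apply Cauchy--Schwarz with the weight $w(z) = \pppg z ^\d \pppg{(x-z)/h}^{-k}$. This yields
\[
 |\tilde S_h(x)|^2 \leq h^{1-n-d}\, J_1(x) \int_\G \pppg z^{-\d} |B(z)| \pppg{(x-z)/h}^{k} \big|f_h^z((x-z)/h)\big|^2 d\sgamma(z),
\]
where $J_1(x) = \int_\G \pppg z ^\d |B(z)| \pppg{(x-z)/h}^{-k} d\sgamma(z)$. The first key step is to prove the uniform bound $J_1(x) \leq C h^d$. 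I would split the $z$-integration into the near-zone $\{|x-z|\leq 1\}$ and the far-zone $\{|x-z|> 1\}$. On the near-zone a dyadic decomposition $\{2^j h < |x-z|\leq 2^{j+1}h\}$ with $j\geq -1$, combined with \eqref{hyp-decBa} applied to balls of radius $2^{j+1}h\leq 1$, produces terms of order $C h^d \,2^{j(d-k)}$ whose sum is finite thanks to $k>d$. On the far-zone one has $\pppg{(x-z)/h}^{-k}\leq h^k$, and \eqref{hyp-decBb} yields a contribution of order $h^k\leq h^d$.

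Then I would insert this bound into $\nr{\tilde S_h}^2_{L^{2,\d}}$, swap the order of integration by Fubini, and perform the change of variables $y=(x-z)/h$, which introduces $h^n$. Peetre's inequality (applicable because $h\leq 1$) gives $\pppg{z+hy}^{2\d}\leq 2^\d \pppg z^{2\d}\pppg y^{2\d}$, so that the factor $\pppg z^{-\d}$ coming from the Cauchy--Schwarz is promoted to $\pppg z^\d$, which is exactly the weight permitted by \eqref{hyp-decBb}. Since $k+2\d\leq 4+2\d+d = 2(2+\d+d/2)$, the remaining $y$-integral is dominated by $\nr{f_h^z}^2_{L^{2,2+\d+d/2}}$, and one last application of \eqref{hyp-decBb} closes the estimate, giving the announced total power $h^{1-n-d}\cdot h^d \cdot h^n = h$.

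The only delicate point is the proper distribution of the weight $\pppg z^\d$ in the Cauchy--Schwarz: it must appear on the same side as the volume estimate \eqref{hyp-decBa}, otherwise the change of variables and Peetre produce $\pppg z^{2\d}$ in the final step, which \eqref{hyp-decBb} no longer controls. Apart from this balance, the whole argument is bookkeeping of weights and powers of $h$, and in particular never uses the detailed geometry of $\G$ beyond the accumulation estimate \eqref{hyp-decBa}.
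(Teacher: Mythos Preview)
Your argument is correct and follows essentially the same route as the paper's proof: a weighted Cauchy--Schwarz that isolates the factor $\int_\G \pppg z^\d |B(z)|\,\pppg{(x-z)/h}^{-k}\,d\sgamma$, the ball--volume estimate \eqref{hyp-decBa} to extract the gain $h^d$, then the change of variables $y=(x-z)/h$ together with Peetre's inequality and \eqref{hyp-decBb} to close. The only cosmetic difference is that the paper organises the annular decomposition \emph{before} applying Cauchy--Schwarz (a linear decomposition $mh\leq |x-z|<(m+1)h$ followed by two successive Cauchy--Schwarz inequalities, and it first extends \eqref{hyp-decBa} to all $r>0$ via \eqref{hyp-decBb}), whereas you apply a single Cauchy--Schwarz with the continuous weight $\pppg{(x-z)/h}^{-k}$ and then bound $J_1$ dyadically; the resulting weight $\pppg y^{k+2\d}$ with $k\leq d+4$ matches exactly the paper's $\pppg y^{4+d+2\d}$.
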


The idea of the proof is the same as for a compactly supported amplitude but we now have to be careful with the decay at infinity for functions on $\G$.

\begin{proof}
We first remark that Assumption \eqref{hyp-decBa} holds for all $r > 0$ since for $x \in \R^n$ and $r \geq 1$ Assumption \eqref{hyp-decBb} gives
\begin{equation} \label{hyp-decB-r}
\int_{B(x,r)\cap \G} \pppg z^\d \abs{B(z)} \, d\sgamma(z) \leq C_1 \leq r^d C_1.
\end{equation}
Let $h\in ]0,1]$ and $x \in \R^n$. According to Cauchy-Schwarz inequality and \eqref{hyp-decB-r} we can write
\begin{eqnarray*}
\lefteqn{\left(\int_{\G} \abs{B(z)}\abs{ f_h^z \left(\frac {x-z}h\right)} \, d\sgamma(z)\right)^2  = \left(\sum_{m\in\N} \int_{mh\leq \abs{x-z}< (m+1)h} \abs{B(z)} \abs{ f_h^z \left( \frac {x-z}h\right)} \, d\sgamma(z)\right)^2}\\
&\hspace{2cm} & \leq c   \sum_{m\in\N} \pppg m^2  \left( \int_{mh\leq \abs{x-z}< (m+1)h} \abs{B(z)} \abs{ f_h^z \left( \frac {x-z}h\right)} \, d\sgamma(z) \right)^2\\
&& \leq c \, C_1 \,  h^d \sum_{m\in\N} \pppg m ^{2+d} \int_{mh\leq \abs{x-z}< (m+1)h} \pppg z ^{-\d} \abs{B(z)} \abs{ f_h^z \left( \frac {x-z}h\right)}^2 \, d\sgamma(z),
\end{eqnarray*}
where $c \geq 0$ stands for different universal constants. 
Now using \eqref{hyp-decBb} we obtain
\begin{eqnarray*}
\lefteqn{\nr{S_h}^2_{L^{2,\d}(\R^n)}}\\
&& \leq c\, C_1 \, h^{1-n} \int_{\R^n}\pppg x ^{2\d}\sum_{m\in\N}  \pppg m^{2+d} \int_{mh\leq \abs{x-z}< (m+1)h}  \pppg z^{-\d} \abs{B(z)} \abs{ f_h^z \left( \frac {x-z}h\right)}^2 \, d\sgamma(z)\, dx\\
&& \leq c\, C_1 \, h \sum_{m\in\N} \pppg m^{2+d} \int_{\G} \int_{m\leq \abs{y}< m+1} \pppg {z+hy} ^{2\d} \pppg z ^{-\d} \abs{B(z)} \abs{ f_h^z (y)}^2 \, dy \, d\sgamma(z)\\
&& \leq c\,  C_1 \, h \sum_{m\in\N} \pppg m ^{-2} \int_{\G} \pppg {z} ^{\d}\abs{B(z)}  \int_{m\leq \abs{y}<m+1} \pppg {y} ^{4+d+2\d}  \abs{ f_h^z (y)}^2 \, dy\, d\sgamma(z) \\
&& \leq c \, C_1^2 \, h.
\end{eqnarray*}
\end{proof}

Applied with $f_h^z = S$ and $B = A$ or $A_R$ for $R > 0$ this proposition gives:

\begin{proposition} \label{prop-norme-S}
There exists a constant $C \geq 0$ such that 
\begin{equation*}
\forall h \in ]0,1], \forall R > 0, \quad \nr{S_h}_{L^{2,\d} (\R^n)}+ \nr{\SR}_{L^{2,\d} (\R^n)}  \leq C \sqrt h.
\end{equation*}
\end{proposition}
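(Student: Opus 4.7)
The plan is to deduce Proposition \ref{prop-norme-S} directly from Lemma \ref{estim-sqrt} by choosing $B = A$ (respectively $B = A_R$) and $f_h^z = S$, so the bulk of the work amounts to checking the two uniform integrability hypotheses \eqref{hyp-decBb} and \eqref{hyp-decBa}.

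First, since $f_h^z = S$ does not depend on $h$ or $z$, and $S \in \Sc(\R^n)$, the quantity $\nr{S}_{L^{2,2+\d+d/2}(\R^n)}^2$ is a finite constant $M$. Thus verifying \eqref{hyp-decBb} for $B = A$ reduces to showing
\[
(1 + M) \int_{\G} \pppg z^{\d} \abs{A(z)} \, d\sgamma(z) < +\infty,
\]
which is a direct consequence of \eqref{hyp-decAb} (the integrals involving $\nr{d_z A}$ and $\nr{\II_z}$ are not needed here, they only play a role in Section \ref{sec-approx}). Hypothesis \eqref{hyp-decBa} for $B = A$ is literally the first summand in \eqref{hyp-decAa}.

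Next, for $B = A_R$ I would observe that $\abs{A_R(z)} = \abs{A(z)} \abs{\Th(z/R)} \leq \abs{A(z)}$ pointwise since $\Th$ is valued in $[0,1]$. Therefore the same verifications as above give \eqref{hyp-decBb} and \eqref{hyp-decBa} for $B = A_R$ with the \emph{same} constant $C_1$ as for $B = A$, independently of $R > 0$. Lemma \ref{estim-sqrt} then yields a single constant $C$, depending only on $C_1$ (and hence on $A$, $S$ but not on $R$ nor on $h$), such that
\[
\nr{\SR}_{L^{2,\d}(\R^n)} \leq C \sqrt h \quad \text{and} \quad \nr{S_h}_{L^{2,\d}(\R^n)} \leq C \sqrt h,
\]
which is the desired statement after absorbing both bounds into a single constant.

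There is no genuine obstacle here: the proposition is a packaging of Lemma \ref{estim-sqrt}. The only point requiring mild attention is the uniformity in $R$, which is secured by the elementary bound $\abs{A_R} \leq \abs{A}$; had the cut-off been allowed to grow with $R$, or had the derivative terms $\nr{d_z A_R}$ been required in \eqref{hyp-decBb}, one would need to control $\nabla \Th(\cdot / R)$ (of size $R^{-1}$) as well, but the statement of Lemma \ref{estim-sqrt} fortunately only involves $\abs{B}$.
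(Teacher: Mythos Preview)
Your proof is correct and follows exactly the paper's approach: the paper simply states that Proposition~\ref{prop-norme-S} is Lemma~\ref{estim-sqrt} applied with $f_h^z = S$ and $B = A$ or $A_R$. Your additional remarks about the uniformity in $R$ via $\abs{A_R} \leq \abs{A}$ make explicit what the paper leaves implicit.
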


For $z \in \G$ and $\x \in \R^n$ we denote by $\x_z^T$ the orthogonal projection of $\x$ on $T_z\G$, and $\x_z^N = \x - \x_z^T$.

\begin{proposition} \label{prop-microloc++}
Let $q  \in \symbor (\R^{2n})$ and assume that for some $\e > 0$ we have
\[
\forall (x,\x) \in \supp q, \forall z \in \supp A, \quad \big( \abs{x - z} \geq \e \quad \text{or} \quad \abs{\x_z^T} \geq \e  \big) .
\]
Then we have
\[
\nr{\Op(q) S_h}_{L^{2,\d}(\R^n)} = \bigo h 0 \big(h^{\frac 32}\big).
\]
\end{proposition}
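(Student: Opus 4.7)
The plan is to write $\Op(q)S_h$ as a double oscillatory integral in $\x$ and $z$ and to integrate by parts in a different variable depending on whether $\abs{x-z}$ is large or small. Starting from
\[
\Op(q) S_h(x) = \frac{h^{\frac{1-n-d}{2}}}{(2\pi)^n} \int_{\G} A(z) \int_{\R^n} e^{\frac{i}{h}\innp{x-z}{\x}} q(x,\x) \hat S(\x) \, d\x \, d\sgamma(z),
\]
I would introduce a smooth partition of unity $\chi_1+\chi_2=1$ on $\R^n$ with $\chi_1$ supported in $\{\abs y\geq \e/2\}$ and $\chi_2$ compactly supported in $\{\abs y<\e\}$, and decompose accordingly $\Op(q)S_h=\Xi_h^{\mathrm{far}}+\Xi_h^{\mathrm{near}}$ by inserting $\chi_1(x-z)$ and $\chi_2(x-z)$ respectively.

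In $\Xi_h^{\mathrm{far}}$, the lower bound $\abs{x-z}\geq \e/2$ allows $N$ integrations by parts in $\x$ with the operator $L_\x=-ih(x-z)/\abs{x-z}^2\cdot\nabla_\x$, which preserves the oscillating phase. Each iteration differentiates $q\hat S$ once in $\x$ (the result stays Schwartz in $\x$ uniformly in $x$, since $q,\partial^\a q$ are bounded and $\hat S\in\Sc$) and contributes a factor bounded by $2h/\e$. After $N$ iterations, the inner $\x$-integral is dominated by $C_{N,M}\,h^N\pppg{(x-z)/h}^{-M}$ for any $M$. This puts $\Xi_h^{\mathrm{far}}$ in the form handled by Lemma~\ref{estim-sqrt}, with amplitude $B=\abs A$ (assumptions \eqref{hyp-decBb}--\eqref{hyp-decBa} follow immediately from \eqref{hyp-decA}) and a profile Schwartz in $y=(x-z)/h$ uniformly in $z$ and $h$, up to the prefactor $h^N$. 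The lemma then yields $\nr{\Xi_h^{\mathrm{far}}}_{L^{2,\d}}=O(h^{N+1/2})$ for every $N$.

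For $\Xi_h^{\mathrm{near}}$, the assumption guarantees $\abs{\x_z^T}\geq\e$ whenever $(x,\x)\in\supp q$, $z\in\supp A$ and $\chi_2(x-z)\neq 0$, so one may insert in the $\x$-integrand a cutoff $\psi$ equal to $1$ on $\{\abs{\x_z^T}\geq\e\}$ and vanishing for $\abs{\x_z^T}<\e/2$ without altering the integral. I would then perform one tangential integration by parts along $\G$ in $z$ using $L_z=ih\,\x_z^T/\abs{\x_z^T}^2\cdot\nabla^T_z$, which likewise preserves the phase. The adjoint $L_z^*=-\divg_\G(\phi\,\cdot\,\x_z^T)$ with $\phi=ih/\abs{\x_z^T}^2$, when applied to $A(z)\chi_2(x-z)$ and expanded, produces three kinds of terms, each carrying an explicit factor $h$: one involving $\chi_2\cdot d_zA$, one involving $A$ times the tangential gradient of $\chi_2(x-z)$ (bounded and supported in $\abs{x-z}\leq\e$), and one involving $A\chi_2\cdot\divg_\G(\x_z^T/\abs{\x_z^T}^2)$. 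The last is controlled pointwise by $C\nr{\II_z}/\e^2$ through the identification of the tangential divergence of $z\mapsto\x_z^T$ with the second fundamental form of $\G$ (see Appendix~\ref{sec-II}).

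Setting $B(z)=\abs{d_zA(z)}+\abs{A(z)}(1+\nr{\II_z})$, the bounds \eqref{hyp-decAb}--\eqref{hyp-decAa} give exactly the hypotheses of Lemma~\ref{estim-sqrt} for this $B$. The leftover $\x$-integral equals, after the change of variable $y=(x-z)/h$, an inverse Fourier transform at $y$ of a function Schwartz in $\x$ uniformly in $x,z$ (the cutoff $\psi$ keeps $\x_z^T/\abs{\x_z^T}^2$ smooth and bounded), defining a profile $f_h^z$ bounded in $L^{2,2+\d+d/2}$ uniformly in $z$ and $h$. Combined with the extracted factor $h$, Lemma~\ref{estim-sqrt} delivers $\nr{\Xi_h^{\mathrm{near}}}_{L^{2,\d}}=O(h^{3/2})$, and adding the two estimates gives the claim. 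The main technical obstacle is precisely the bookkeeping in this tangential integration by parts: cleanly expressing $L_z^*$ along $\G$ in terms of $\II_z$ (Appendix~\ref{sec-II}), and ensuring that no boundary term arises from the lack of compact support of $A$, which is guaranteed by \eqref{hyp-Abord} together with the global integrability furnished by \eqref{hyp-decA}.
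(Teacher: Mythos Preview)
Your approach mirrors the paper's: the same near/far splitting in $\abs{x-z}$, integration by parts in $\x$ on the far piece, one tangential integration by parts in $z$ on the near piece with the divergence of $\x^T$ expressed through $\II_z$, and Lemma~\ref{estim-sqrt} to close. The paper differs only cosmetically---it keeps the source written with $S((y-z)/h)$ rather than passing to $\hat S$, estimates the far piece directly by Cauchy--Schwarz instead of via Lemma~\ref{estim-sqrt}, and for the near piece avoids your auxiliary cutoff $\psi$ by localizing in $z$ through a finite partition of unity $(\z_k)$ supported in orientable charts, so that Green's Theorem~\ref{th-green} applies. Two small points in your sketch deserve a word: your $\psi$ depends on $z$ through $\x_z^T$ and is therefore also hit by $L_z^*$, though the extra term vanishes since on $\supp(\partial_z\psi)$ one has $\abs{\x_z^T}<\e$, forcing $A\,\chi_2\,q=0$ by hypothesis; and the absence of boundary terms in the tangential integration by parts comes not from global integrability of $A$ but from the compact support of $z\mapsto A(z)\chi_2(x-z)$ for each fixed $x$ (via \eqref{hyp-Abord}), together with the local orientability that the paper makes explicit with its partition of unity.
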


\begin{proof} 
\stepp
We have
\begin{align*}
\Op (q) S_h(x)
 = \frac {h^{\frac {1-n-d}2}}  {(2\pi h)^n} \int _{\R^n} \int_{\R^n} \int _\G e^{\frac ih \innp{x-y} \x} q( x,\x ) A(z) S\left( \frac {y-z} h\right) \, d\sgamma(z) \, dy \, d\x.
\end{align*}
We recall that this is defined in the sense of an oscillatory integral. After a finite number of partial integrations with the operator $\frac {1 + ih \x \cdot \nabla_y}{1 +  \abs \x^2}$ we can assume that $q \in \symb \big(\pppg \x^{-(n+1)}\big)$. Then we can use Fubini's Theorem and make the change of variables $y = z + h v$ for any fixed $z$. 
%
%
Let $\h_1 \in C_0^\infty(\R^n,[0,1])$ be supported in $B(0,\e)$ and equal to 1 on a neighborhood of 0. We set $\h_2 = 1 - \h_1$. For $x \in \R^n$ and $h \in ]0,1]$ we can write
\[
\Op (q) S_h(x) = I_1(x,h) + I_2(x,h),
\]
where for $j \in \{1,2\}$:
\[
I_j(x,h)  =  \frac {h^{\frac {1-n-d}2}}  {(2\pi)^n}  \int _\G A(z)  \h_j(x-z)  \int _{\R^n} \int_{\R^n} e^{\frac ih \innp{x-z} \x} e^{-i \innp v \x}  q( x,\x ) S(v) \, dv \, d\x\, d\sgamma(z).\\
\]

\stepp
Let $N \in\N$ and
\[
B(x,z,h) = \frac   {\h_2 (x-z)} {(2\pi)^{n}}  \int _{\R^n} \int_{\R^n} e^{\frac ih \innp{x-z} \x}S(v)  L^N \left(e^{-i \innp v \x}  q( x,\x )\right) \, dv \, d\x,
\]
where for $f \in C^\infty(\R^n \times \G \times \R^n \times ]0,1])$ we have set
\[
 L v : (x,z,\x,h) \mapsto  -ih \divg _\x \left( \frac {(x-z) f(x,z,\x,h)}{\abs{x-z}^2 } \right).
\]
Then there exists a constant $c$ such that
\[
\forall h \in ]0,1], \forall x \in \R^n , \forall z \in \G, \quad \abs{B(x,z,h)}  \leq c  {h^N}\pppg {x-z}^{-N},
\]
and hence
\begin{align*}
\abs{I_2(x,h)}^2
& = h^{1-n-d}  \abs{ \int _{\G} A(z)  B(x,z,h)   \, d\sgamma(z) }^2\\
& \leq c  \, h^{2N + 1 - n - d} \left(\int_{\G} \pppg z^\d  \abs{A(z)}  \, d\sgamma(z) \right) \left(  \int_{\G} \pppg z ^ {-\d} \abs{A(z)}  \pppg{x-z}^{-2N } \, d\s(z)\right),
\end{align*}
where $c$ depends neither on $x \in\R^n $ nor on $h \in ]0,1]$. We obtain
\begin{align*}
\nr{I_2(h)}^2_{L^{2,\d}(\R^n)}
& \leq c\, h^{2N + 1 - n - d}\int_{\G} \int_{\R^n} \pppg z ^ {-\d} \abs{A(z)} \pppg x^{2\d} \pppg{x-z}^{-2N}  \, dx \, d\s(z) \\
& \leq c\, h^{2N + 1 - n - d}\int_{\G} \int_{\R^n} \pppg z ^ {\d} \abs{A(z)} \pppg {x-z}^{2\d} \pppg{x-z}^{-2N}  \, dx \, d\s(z) 
\end{align*}
and finally, if $N$ was chosen large enough:
\[
\nr{I_2(h)}^2_{L^{2,\d} (\R^n)} = \bigo h 0 \big(h^3\big) .
\]

\stepp
We now turn to $I_1$. Let $x,\x \in \R^n$. The function $z \mapsto \x_z^T \in T_z \G$ defines a vector field on $\G$, which we denote by $\x^T$, and the norm of $\x_z^T$ in $T_z\G$ is the same as in $\R^n$. By assumption, if $A(z) q(x,\x) \h_1(x-z) \neq 0$ then $\abs {\x_z^T} \geq \e$. And we remark that when $\x_z^T \neq 0$ we have
\[
e^{\frac ih \innp{x-z}\x}  =  ih{\abs{\x_z^T}^{-2}} \x_z^T \cdot  e^{\frac ih \innp{x-z}\x},
\]
where $\x_z^T \cdot f(z)$ is the derivative of $f \in C^\infty(\G)$ at point $z$ and in the direction of $\x_z^T$.
For any $z \in \supp A \cap \bar {B(x,\e)}$ (which is compact according to Assumption \eqref{hyp-Abord}) there exists an open neighborhood $\Vc_z$ of $z$ in $\G$ which is orientable, and we can find $z_1,\dots,z_K \in \supp A \cap \bar {B(x,\e)}$ such that $\supp A \cap \bar {B(x,\e)} \subset \bigcup_{k=1}^K \Vc_{z_k}$. We consider $\z_1,\dots, \z_K \in C_0^\infty(\G,[0,1])$ such that $\sum_{ k=1}^K \z_k = 1$ in a neighborhood of $\supp A \cap \bar {B(x,\e)}$ in $\G$ and $\z_k$ is supported in $\Vc_k$ for all $k \in \Ii 1 K$.
Let $k\in\Ii 1 K$. According to Green's Theorem \ref{th-green} we have
\[
\int _\G \divg \left( e^{\frac ih \innp{x-z}\x} \h_1(x-z) (A\z_k)(z) \abs{\x_z^T}^{-2} \x_z^T\right) \, d\sgamma(z) = 0
\]
and hence
\begin{eqnarray*}
\lefteqn{\int_{\G} e ^ {\frac ih \innp{x-z}\x}  (A \z_k) (z) \h_1 (x-z) \, d\sgamma(z) }\\
&& = - ih\int_{\G} \divg \x^T(z)  {\abs{\x_z^T}^{-2}} e^{\frac ih \innp{x-z}\x}  (A \z_k) (z)  \h_1( x-z) \, d\sgamma(z)\\
&& \quad - ih \int_{\G} e^{\frac ih \innp{x-z}\x} {\x_z^T  }\cdot \big((A\z_k)(z)  \h_1( x-z)\big)  {\abs{\x_z^T}^{-2}} \, d\sgamma(z)\\
&& \quad  + ih \int_{\G} e^{\frac ih \innp{x-z}\x} (A\z_k)(z)  \h_1( x-z) {\abs{\x_z^T}^{-4}} \x_z^T \cdot  \abs{\x_z^T}^2  \, d\sgamma(z).
\end{eqnarray*}
Taking the sum over $k \in \Ii 1 K$ gives
\[
I_{1}(h) = -ih \big(I_{1,1}(h) + I_{1,2}(h) + I_{1,3}(h) \big)
\]
where, for instance,
\begin{eqnarray*}
\lefteqn{I_{1,1}(x,h)}\\
&& =  \frac {h^{\frac {1-n-d}2}}  {(2\pi)^n}  \int _\G A(z)   \h_1(x-z)  \int _{\R^n} \int_{\R^n} \divg \x^T(z)  {\abs{\x_z^T}^{-2}} e^{\frac ih \innp{x-z} \x} e^{-i \innp v \x} q(x,\x) S(v) \, dv \, d\x\, d\sgamma(z)\\
&& =  \frac {h^{\frac {1-n-d}2}}  {(2\pi)^n}  \int _\G A(z)\pppg {\nr{\II_z}}   \h_1(x-z)  \int _{\R^n} \int_{\R^n}  e^{\frac ih \innp{x-z} \x} e^{-i \innp v \x} q_{1,z}(x,\x) S(v) \, dv \, d\x\, d\sgamma(z)\\
&& = h^{\frac {1-n-d}2}    \int _\G A(z) \pppg {\nr{\II_z}} \h_1(x-z) \big(\Op(q_{1,z}) S \big) \left(\frac {x-z}h \right)\, d\sgamma(z) .
\end{eqnarray*}
Here we have set
\[
q_{1,z}(x,\x) = \pppg {\nr{\II_z}} \inv  \divg \x^T(z)  {\abs{\x_z^T}^{-2}}q(x,\x).
\]
%


\stepp
We recall that the Levi-Civita connection $\nabla ^\G$ on the submanifold $\G$ at point $z$ is given by the orthogonal projection on $T_z\G$ of the usual differential on $\R^n$ (see Appendix \ref{sec-II}).
Let $z \in \G$. Let $Y$ be a vector field on a neighborhood of $z$ in $\G$ and let $\bar Y$ be an extension of $Y$ on a neighborhood of $z$ in $\R^n$.
Using \eqref{der-innp} we see that on a neighborhood of $z$ in $\G$ we have
\begin{align*}
\innp{\nabla_{Y} \x^T} {Y}_ {T \G}
& = Y \cdot \innp{ \x^T}{Y}_ {T\G} - \innp{\x^T} {\nabla_{Y} Y}_ {T\G} = \bar Y \cdot \innp{ \x}{\bar Y}_ {\R^n}- \innp{\x^T} {\nabla_{Y} Y}_ {T\G}\\
& = \innp{\x} {\nabla_{Y}Y + \II(Y,Y)}_{\R^n}- \innp{\x^T} {\nabla_{Y} Y}_ {T\G}\\
& = \innp{\x^N} {\II(Y,Y)}_{\R^n}.
\end{align*} 
Now if $Y_1,\dots, Y_d$ are vector fields such that $(Y_1(z),\dots, Y_d(z))$ is an orthonormal basis of $T_z\G$ we obtain
\begin{align*}
\divg \x^T (z) = \Tr \left( Y \mapsto \nabla_Y \x^T (z)\right) = \sum _{j=1}^d \innp{\x_z^N} {  \II_z (Y_j(z),Y_j(z))}_ {\R^n} ,
\end{align*}
and hence
\[
 \abs{\divg \x^T(z)} \leq d \abs{\x_z^N} \nr{\II_z}.
\]
We can apply Lemma \ref{estim-sqrt} with $B = A\pppg {\nr{\II_z}}$ and $f_h^z(x) = \h_1(h x) (\Opw(q_{1,z}) S)(x)$. Indeed, since $q$ is assumed to be in $\symb(\pppg \x\inv)$, $q_{1,z}$ is in $\symbor(\R^{2n})$ uniformly in $z \in \G$, and hence for all $ k\in\N$ the operator $\Opw(q_{1,z})$ belongs to $\Lc(L^{2,k}(\R^n))$ and the norm is uniform in $z$ (see \cite{wang88}). This proves that
\[
 \nr{I_{1,1}(h)}_{L^{2,\d}(\R^n)} = \bigo h 0 \big( \sqrt h \big).
\]

\stepp
We have
\[
 \x_z^T \cdot \big( A(z) \h_1(x-z) \big)=   \h_1(x-z) d_zA(\x_z^T) +  A(z)  \, \x_z^T \cdot \h_1(x-z).
\]
%
%
%
We set 
\[
\begin{cases}
 B_{2,1}(z) = \nr {d_zA} \\
q_{z,2,1}(x,\x) =  \h_1(hx)d_zA(\x_z^T)\nr {d_zA}\inv  q(x,\x)
\end{cases}
\quad \text {and} \quad 
\begin{cases}
 B_{2,2}(z) = A(z)\\
q_{z,2,2}(x,\x) = q(x,\x) d(\h_1)_{hx}(\x_z^T)
\end{cases}
\]
($d_zA(\x_z^T)\nr {d_zA}\inv$ can be replaced by $\abs{\x}$ when $\nr {d_zA}=0$). As above, applying Lemma \ref{estim-sqrt} with $B_{2,j}$ and $f_{h,2,j}^z =\Opw(q_{z,2,j}) S$ for $j \in \{1,2\}$ gives then
\[
 \nr{I_{1,2}(h)}_{L^{2,\d}(\R^n)} = \bigo h 0 \big( \sqrt h \big).
\]

\stepp
We now deal with $I_{1,3}(h)$. For any vector field $Y$ on $\G$ we have
\begin{align*}
Y \cdot \innp{\x^T}{\x^T}_{T\G} 
& = Y \cdot \innp{\x}{\x^T}_{\R^n}\\
&  =  \innp{\x}{ \nabla_Y \x^T + \II(Y,\x^T)} = \innp{ \x^T}{\nabla_Y \x^T}_{T \G} + \innp{\x^N}{\II(Y,\x^T)}_{\R^n}\\
& = \frac 12 Y \cdot \innp{ \x^T}{\x^T}_{T \G} + \innp{\x^N}{\II(Y,\x^T)}_{\R^n} ,
\end{align*}
and in particular:
\[
\x^T \cdot  \innp{\x^T}{\x^T}_{T\G}  = 2  \innp{\x^N}{\II \big(\x^T,\x^T\big)}_{\R^n}.
\]
Thus we can estimate $I_{1,3}(h)$ as $I_{1,1}(h)$, and this concludes the proof.
\end{proof}

We now check that according to \eqref{hyp-non-incoming} the assumptions of Proposition \ref{prop-microloc++} are satisfied for a symbol $q$ supported in an incoming region:

\begin{proposition} \label{prop-microloc-inf}
Let $\s_2 \in ]\s_1,1]$, $R_2 > R_1$ and $\n_0 > 0$. Then there exists $\e > 0$ such that for all $(x,\x) \in \zoneS_-(R_2,\nu_0,-\s_2)$ and $z \in \G$ we have
\[
\abs{x - z} \geq \e \quad \text{or} \quad \abs{\x_z^T} \geq \e.
\]
\end{proposition}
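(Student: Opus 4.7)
My plan is to argue by contradiction. Suppose the conclusion fails; then for each integer $n \geq 1$ there exist $(x_n,\xi_n) \in \zoneS_-(R_2,\nu_0,-\sigma_2)$ and $z_n \in \G$ with
\[
\abs{x_n - z_n} < \frac 1n \quad \text{and} \quad \abs{(\xi_n)_{z_n}^T} < \frac 1n.
\]
I would then construct, from this sequence, an element of $N\G \cap \zoneS_-(R_1,0,-\sigma_1)$, directly contradicting assumption \eqref{hyp-non-incoming}.

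The natural candidate is $(z_n,(\xi_n)_{z_n}^N)$: by construction $(\xi_n)_{z_n}^N \perp T_{z_n}\G$, so $(z_n,(\xi_n)_{z_n}^N) \in N\G$. For $n$ large enough, three geometric checks remain, all routine once the ratios are tracked. First, $\abs{z_n} \geq \abs{x_n} - \abs{x_n-z_n} \geq R_2 - 1/n > R_1$. Second, since $\abs{\xi_n} \geq \nu_0$, one has $\abs{(\xi_n)_{z_n}^N}^2 = \abs{\xi_n}^2 - \abs{(\xi_n)_{z_n}^T}^2 \geq \nu_0^2 - 1/n^2 > 0$, so the normal part is nonzero and the ratio $\abs{\xi_n}/\abs{(\xi_n)_{z_n}^N}$ tends to $1$.

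Third, the angle condition: decompose
\[
\innp{z_n}{(\xi_n)_{z_n}^N} = \innp{x_n}{\xi_n} + \innp{z_n-x_n}{\xi_n} - \innp{z_n}{(\xi_n)_{z_n}^T}.
\]
Using $\innp{x_n}{\xi_n} \leq -\sigma_2 \abs{x_n}\abs{\xi_n}$, $\abs{z_n-x_n} \leq 1/n$ and $\abs{(\xi_n)_{z_n}^T} \leq 1/n$, I divide by $\abs{z_n}\abs{(\xi_n)_{z_n}^N}$ and use $\abs{x_n}/\abs{z_n} \to 1$ and $\abs{\xi_n}/\abs{(\xi_n)_{z_n}^N} \to 1$ to obtain
\[
\frac{\innp{z_n}{(\xi_n)_{z_n}^N}}{\abs{z_n}\abs{(\xi_n)_{z_n}^N}} \leq -\sigma_2 + \bigo n {+\infty}(1).
\]
Since $\sigma_2 > \sigma_1$, this upper bound is at most $-\sigma_1$ for $n$ large, which places $(z_n,(\xi_n)_{z_n}^N)$ in $N\G \cap \zoneS_-(R_1,0,-\sigma_1)$ and yields the desired contradiction.

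The only mild obstacle is bookkeeping: one must be careful that the smallness of $\abs{(\xi_n)_{z_n}^T}$ is absolute (not relative to $\abs{\xi_n}$), which is exactly what the lower bound $\abs{\xi_n} \geq \nu_0$ is needed for—it prevents $\abs{\xi_n}$ from shrinking and allows us to safely pass from the tangential estimate to the convergence $\abs{(\xi_n)_{z_n}^N}/\abs{\xi_n} \to 1$. No compactness or extraction of subsequences is actually necessary; the estimates are uniform in $n$ once these ratios are controlled.
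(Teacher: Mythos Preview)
Your argument is correct and is essentially the contrapositive of the paper's direct proof: the paper assumes $\abs{x-z}$ is below an explicit threshold and, using the same decomposition $\innp{z}{\xi_z^N} = \innp{x}{\xi} + \innp{z-x}{\xi} - \innp{z}{\xi_z^T}$ together with \eqref{hyp-non-incoming}, derives a quantitative lower bound $\abs{\xi_z^T} \geq \nu_0(\sigma_2-\sigma_1)/2$. One notational slip: your displayed bound should read $-\sigma_2 + o(1)$ rather than $-\sigma_2 + O(1)$, since as written $O(1)$ does not justify the conclusion that the ratio eventually drops below $-\sigma_1$.
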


\begin{proof}
Let $(x,\x) \in \zoneS_-(R_2,\n_0,-\s_2)$, $z \in \G$, and assume that 
\[
 \abs {x-z} \leq \min \left( R_2 - R_1, \frac {R_1(\s_2-\s_1)}4 \right).
\]
In particular $\abs z \geq R_1$ and hence, according to \eqref{hyp-non-incoming}, we have
\begin{align*}
\abs{z} \abs{\x_z^T}
&  \geq - \innp z {\x_z^T} = \innp{z}{\x_z^N - \x} = \innp z {\x_z^N}- \innp x \x + \innp{x-z}{\x} \\
& \geq - \s_1 \abs z \abs {\x_z^N}  + \s_2 \abs x \abs \x  - \abs{x-z} \abs \x
\geq \left( - \s_1 + \s_2 \frac {\abs x}{\abs z} - \frac {\abs {x-z}}{\abs z} \right) \abs z \abs \x\\
& \geq \left( \s_2 - \s_1  - (1+\s_2)\frac {\abs {x-z}}{R_1} \right) \abs z \nu_0\\
& \geq \abs z \frac {\n_0(\s_2-\s_1)}2 .
\end{align*}
\end{proof}

Now we can estimate the solution $u_h$ in an incoming region. We recall the following result:

\begin{proposition} \label{prop-incoming}
Let $\tilde R > 0$, $0 \leq \tilde \nu < \nu $ and $- 1 <  \tilde \s < \s < 1$. Then there exists $R > \tilde R$ such that for $\o \in \symb_0(\R^{2n})$ supported outside $\zoneS_- (\tilde R,\tilde \nu,- \tilde \s)$ and $\o_- \in \symb_0 (\R^{2n})$ supported in $\zoneS_-(R,\nu,-\s)$, we have
\[
\sup_{z \in  {\C_{J,+}}} \nr{\pppg x ^{-\d} \Op(\o_-) (\hh-z)\inv  \Op (\o) \pppg x ^{-\d}}_{\Lc(L^2(\R^n))} = \bigo h 0 (h^\infty).
\]
Moreover the estimate remains true for the limit ${(\hh-(\l +i0))\inv}$, $\l \in J$.
\end{proposition}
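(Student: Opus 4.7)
The strategy is classical: represent the resolvent as an integral of the propagator over positive times, and combine Egorov's theorem with a geometric argument in the incoming region. For $z \in \C_{J,+}$ I start from
$$(\hh - z)^{-1} = \frac{i}{h} \int_0^{+\infty} e^{-\frac{it}{h}(\hh - z)} \, dt,$$
which reduces the proposition to suitable bounds on
$$\Op(\o_-) (\hh - z)^{-1} \Op(\o) = \frac{i}{h} \int_0^{+\infty} e^{\frac{itz}{h}} \, \Op(\o_-) \, e^{-\frac{it}{h}\hh} \, \Op(\o) \, dt.$$
The limit $\Im z \to 0^+$ is dealt with at the very end, using the already available weighted bound $\nr{\pppg x^{-\d}(\hh - z)^{-1}\pppg x^{-\d}}_{\Lc(L^2(\R^n))} \leq c/h$.

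The key geometric input, which dictates how $R$ must be chosen, is that for $R$ large enough, any $(x,\x) \in \zoneS_-(R,\n,-\s)$ satisfies $\vf^{-t}(x,\x) \in \zoneS_-(\tilde R,\tilde \n,-\tilde \s)$ for every $t \geq 0$, and moreover $\abs{X(-t,(x,\x))} \to +\infty$ as $t \to +\infty$. This follows from a direct Gronwall-type computation on $\abs{X(t)}^2$ and $\innp{X(t)}{\Xi(t)}$, exploiting the long-range decay \eqref{hyp-decV} of $\nabla V_1$ to absorb its contribution once $\abs x \geq R$. As a consequence $\supp \o \cap \vf^{-t}(\supp \o_-) = \vide$ for every $t \geq 0$, since $\supp \o$ is disjoint from $\zoneS_-(\tilde R,\tilde \n,-\tilde \s)$.

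For $t$ in a bounded interval $[0,T]$, Egorov's theorem applied to $\hh$ (the dissipative term $-ihV_2$ only enters at subprincipal order) yields, for every $N \in \N$,
$$e^{\frac{it}{h}\hh} \Op(\o_-) e^{-\frac{it}{h}\hh} = \Op(\o_- \circ \vf^t) + h\,\Op(r_{1,t}) + \dots + h^N\,\Op(r_{N,t}) + h^{N+1}\,R_{t,N},$$
the remainders having symbols supported in $\vf^{-t}(\supp \o_-)$ and controlled uniformly in $t \in [0,T]$. Composing with $\Op(\o)$, the disjoint-support rule of pseudodifferential calculus gives an $O(h^\infty)$ contribution, and integrating over $[0,T]$ preserves this.

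The main obstacle is the large-time regime $t \geq T$: the Egorov expansion degrades exponentially in $t$, and one cannot afford to lean on $\Im z > 0$ for uniformity up to the real axis. I would handle it by exploiting the linear escape $\abs{X(-t,(x,\x))} \to +\infty$ established in the geometric step. A dyadic decomposition in $t$, combined with the weights $\pppg x^{-\d}$ (with $\d > 1/2$) and the fact that $\vf^{-t}(\supp \o_-)$ leaves any compact set at linear rate, produces a summable-in-$t$ decay of the relevant operator norms. Alternatively, one may build an Isozaki--Kitada-type FIO parametrix valid microlocally in the incoming region, which makes the corresponding estimate exact modulo $O(h^\infty)$. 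Combining the finite- and large-time estimates and then passing to the limit $\Im z \to 0^+$ via the weighted resolvent bound yields the result uniformly in $z \in \C_{J,+} \cup J$.
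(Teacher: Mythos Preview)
The paper does not supply its own proof of this proposition: it simply states that the result is proved in \cite{robertt89} in the self-adjoint case and extended to the present non-selfadjoint setting in \cite{art-mesure,art-nondiss}. Your outline is precisely the Robert--Tamura strategy underlying those references: write the resolvent as a time integral of the propagator, use the geometric stability of incoming regions under the backward flow to get disjoint supports, apply Egorov for bounded times, and invoke an Isozaki--Kitada parametrix (not the dyadic alternative you sketch first, which is too vague as stated) to control the large-time tail uniformly as $\Im z \to 0^+$.

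Two points deserve more care in the non-selfadjoint setting. First, since $V_2$ is not assumed non-negative, $t\mapsto e^{-\frac{it}{h}\hh}$ is not a contraction semigroup, so your identity $(\hh-z)^{-1}=\frac{i}{h}\int_0^\infty e^{-\frac{it}{h}(\hh-z)}\,dt$ is not immediate for all $z\in\C_{J,+}$; one first works with $\Im z$ large enough (or with $h$ small depending on $\Im z$) and then extends via the uniform weighted resolvent bound. Second, the dissipative Egorov theorem used here (Theorem~7.2 in \cite{art-nondiss}, also invoked elsewhere in the paper) already incorporates the subprincipal $-ihV_2$ and the damping factor $\exp\big(-\int_0^t V_2(X(s))\,ds\big)$; your remark that $V_2$ ``only enters at subprincipal order'' is correct but hides that this factor is what keeps the expansion under control. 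With these adjustments your sketch matches the cited proofs.
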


This theorem is proved in \cite{robertt89} for the self-adjoint case and extended in \cite{art-mesure,art-nondiss} for our non-selfadjoint setting. Before giving an estimate of $u_h$ in an incoming region, we recall that it concentrates on the hypersurface of energy $E_0$.
For $h\in ]0,h_0]$, $t\geq 0$ and $z \in \C$ we set 
\[
\uh (t,z) = e^{-\frac {it}h(\hh-z)}.
\]

\begin{proposition} \label{prop-loc-E0} 
Let $q \in \symbor(\R^{2n})$ be a symbol which vanishes on $p\inv(I)$ for some neighborhood $I \subset J$ of $E_0$. Let $T \geq 0$. Then there exists $C \geq 0$ such that for $h \in ]0,1]$, $z \in \C_{I,+}$ and $\h \in C_0^\infty(\R,[0,1])$ non-increasing and supported in $[0,T+1]$ we have 
\[
\nr{\frac ih \int_0^\infty \h(t)  \Opw(q) \uh(t,z) \, dt }_{\Lc(L^2(\R^n))} \leq C
\]
and
\[
 \nr{ \Opw(q) (\hh-z)\inv } _{\Lc(L^{2,\d}(\R^n),L^{2,-\d}(\R^n))} \leq C . 
\]
In particular if $q \in C_0^\infty(\R^{2n})$ is supported outside $p \inv(\singl {E_0})$ we have 
\[
 \innp{\Opw(q) u_h}{u_h} \limt h 0 0.
\]

\end{proposition}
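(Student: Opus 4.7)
The strategy is to exploit the semiclassical ellipticity of $\hh-z$ on $\supp q$, away from the characteristic set $p\inv(\singl{E_0})$. Since $q$ vanishes on $p\inv(I)$ and $\Re z \in I$, the function $s_z := q/(p-z)$ is smooth with all derivatives uniformly bounded for $z \in \C_{J,+} \cup J$. By the Moyal composition formula,
\[
\Opw(s_z)\Opw(p-z) = \Opw(q) + h\Opw(r_{h,z}),
\]
with $r_{h,z} \in \symbor$ uniformly in $h,z$, and since $\Opw(p-z) = \hh - z + ih V_2$, rearranging yields the key identity
\[
\Opw(q) = \Opw(s_z)(\hh - z) + ih\,\Opw(s_z) V_2 - h\,\Opw(r_{h,z}).
\]

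For the first (time-integral) estimate, I substitute this identity into $\frac{i}{h}\int_0^\infty \eta(t)\Opw(q)\uh(t,z)\,dt$ and use the fundamental relation $(\hh - z)\uh(t,z) = ih\,\partial_t \uh(t,z)$ on the first piece: the factor $\frac{i}{h}$ absorbs $ih$, so that after an integration by parts in $t$ (the boundary at $+\infty$ vanishes by compact support of $\eta$) one gets $\eta(0)\Opw(s_z)$ plus an integral against $\eta'$. The two corrections from the identity absorb the $\frac{i}{h}$ against their own $h$ prefactor, producing $O(1)$ operator contributions. Uniform boundedness on $L^2$ then follows from Calder\'on--Vaillancourt applied to $\Opw(s_z),\Opw(r_{h,z})$, boundedness of $V_2$, integrability of $\eta,\eta'$ on $[0,T+1]$, and a Gronwall bound $\nr{\uh(t,z)}_{\Lc(L^2)} \leq e^{t\nr{V_2}_\infty}$ valid uniformly for $t \in [0,T+1]$, $\Im z \geq 0$, and $h \in ]0,1]$ (derived from $\frac{d}{dt}\nr{\uh(t,z) u}^2 = -2\innp{V_2 \uh u}{\uh u} - \frac{2\Im z}{h}\nr{\uh u}^2$). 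For the second (resolvent) estimate I compose the identity on the right with $(\hh - z)\inv$:
\[
\Opw(q)(\hh - z)\inv = \Opw(s_z) + \bigl[ih\,\Opw(s_z) V_2 - h\,\Opw(r_{h,z})\bigr](\hh - z)\inv.
\]
The first term is $L^2$-bounded and hence $L^{2,\d}\to L^{2,-\d}$; the bracket is $O(h)$, which combined with the hypothesized bound $\nr{\pppg x^{-\d}(\hh - z)\inv \pppg x^{-\d}}_{\Lc(L^2)} \leq C/h$ gives the announced $O(1)$ control. The limit $z = \l + i0$, $\l \in J$, is obtained by passage to the limit $\Im z \to 0^+$.

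For the ``in particular'' conclusion, let $q \in C_0^\infty(\R^{2n})$ be supported outside $p\inv(\singl{E_0})$; by compactness, $q$ vanishes on $p\inv(I)$ for some open $I \subset J$ containing $E_0$, so the second estimate applies, and combined with $\nr{S_h}_{L^{2,\d}} = O(\sqrt h)$ from Proposition \ref{prop-norme-S} yields $\nr{\Opw(q) u_h}_{L^{2,-\d}} = O(\sqrt h)$. Iterating the parametrix to order $h^N$ (possible since compact support of $q$ makes every iterate a compactly-supported symbol acting $L^{2,-\d} \to L^{2,\d}$) one produces the extra $h$-factor beyond the $O(\sqrt h)\cdot O(1/\sqrt h)=O(1)$ estimate of a single application, forcing $\innp{\Opw(q) u_h}{u_h} \to 0$. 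The main delicate point is the bookkeeping in weighted spaces: one has to check that $V_2 = O(\pppg x^{-1-\rho})$ realizes a bounded map $L^{2,-\d}\to L^{2,\d}$ (which is compatible with $\d >\tfrac12$ as long as $\d\leq\tfrac{1+\rho}{2}$) and that pseudodifferential operators with symbols in $\symbor$ act continuously on the relevant weighted $L^2$-spaces uniformly in $h$ and $z$.
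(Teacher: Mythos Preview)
The paper does not actually prove this proposition; it simply refers to Proposition~2.11 of \cite{art-mesure}. Your elliptic-parametrix approach is the natural one and is almost certainly what underlies the cited reference, so for the two operator estimates your argument is essentially what one would expect. (One minor caveat: the $C_b^\infty$--seminorms of $s_z=q/(p-z)$ degenerate as $\Re z\to\partial I$, so the bound is really uniform only for $\Re z$ in a compact subset of $I$; this is harmless since in all applications $E_h\to E_0$.)

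There is, however, a real gap in your ``in particular'' step. Iterating the parametrix on $\Opw(q)u_h$ alone yields
\[
\Opw(q)u_h=\Big(\sum_{k=0}^{N-1}h^{k}\Opw(s_k)\Big)S_h+h^{N}B_N u_h,
\]
and pairing with $u_h$ leaves a leading term $\innp{\Opw(s_0)S_h}{u_h}$ of size $O(\sqrt h)\cdot O(h^{-1/2})=O(1)$ which does \emph{not} improve with $N$: the iteration only kills the remainder, not this principal contribution. The correct way to gain the extra factor is to use the parametrix on \emph{both} sides of the pairing. Choose $\tilde q\in C_0^\infty(\R^{2n})$ supported outside $p^{-1}(\{E_0\})$ with $\tilde q=1$ on a neighborhood of $\supp q$. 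Since the supports of $1-\tilde q$ and $q$ are disjoint (together with all derivatives), $(1-\Opw(\tilde q))\Opw(q)=O(h^\infty)$ in $\Lc(L^{2,-\d},L^{2,\d})$, hence
\[
\innp{\Opw(q)u_h}{u_h}=\innp{\Opw(q)u_h}{\Opw(\tilde q)u_h}+O(h^\infty)\nr{u_h}_{L^{2,-\d}}^{2}.
\]
Now a \emph{single} application of your parametrix identity, using that $s_z$ and $r_{h,z}$ are compactly supported (so the corresponding operators map $L^{2,-\d}$ into $L^2$), gives $\nr{\Opw(q)u_h}_{L^2}=O(\sqrt h)$ and likewise $\nr{\Opw(\tilde q)u_h}_{L^2}=O(\sqrt h)$; the product is $O(h)\to0$. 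This is the missing ingredient.
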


This is Proposition 2.11 in \cite{art-mesure}. Note that the same holds if $\Opw(q)$ is on the right of the propagator or the resolvent.

\begin{proposition} \label{prop-mes-zoneS}
Let $\s \in ]\s_1,1[$. Then there exists $R \geq 0$ such that for $q \in C_0^\infty(\R^{2n})$ supported in $\zoneS_-(R,0,-\s)$ we have
\[
\innp{\Opw(q) u_h}{u_h} \limt h 0 0.
\]
\end{proposition}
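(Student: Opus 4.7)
The plan is to combine Propositions~\ref{prop-loc-E0}, \ref{prop-incoming}, \ref{prop-microloc++} and \ref{prop-microloc-inf}. First, by Proposition~\ref{prop-loc-E0} I may reduce to symbols $q$ supported in a small neighborhood of $p\inv(\{E_0\})$, since the complementary piece contributes $o(1)$ to $\innp{\Opw(q) u_h}{u_h}$. Because $V_1\to 0$ at infinity, this forces $|\xi|\geq\nu_0$ on $\supp q$ for some $\nu_0>0$ as soon as $R$ is large enough, so $\supp q\subset\zoneS_-(R,\nu_0,-\sigma)$. I would then fix $\sigma_1<\tilde\sigma<\sigma$, $R_1<\tilde R<R$, $0<\tilde\nu<\nu_0$, and a partition of unity $1=\omega+\chi$ with $\omega,\chi\in\symb_0(\R^{2n})$, $\omega$ supported outside $\zoneS_-(\tilde R,\tilde\nu,-\tilde\sigma)$ and $\chi$ supported inside (so $|\xi|\geq\tilde\nu>0$ on $\supp\chi$). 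Propositions~\ref{prop-microloc-inf} and \ref{prop-microloc++} then give $\nr{\Opw(\chi) S_h}_{L^{2,\delta}(\R^n)}=O(h^{3/2})$, hence $\nr{(\hh-(E_h+i0))\inv\Opw(\chi) S_h}_{L^{2,-\delta}(\R^n)}=O(\sqrt h)$, and Proposition~\ref{prop-incoming} gives $\Opw(q)(\hh-(E_h+i0))\inv\Opw(\omega)=O(h^\infty)$ as an operator $L^{2,\delta}(\R^n)\to L^{2,-\delta}(\R^n)$.

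Expanding $u_h=(\hh-(E_h+i0))\inv[\Opw(\omega)+\Opw(\chi)]S_h$ in both slots of $\innp{\Opw(q) u_h}{u_h}$ produces four terms $I_{\alpha\beta}$ with $\alpha,\beta\in\{\omega,\chi\}$. The diagonal term $I_{\chi\chi}$ is directly $O(h)$: the compact support of $q$ gives $\nr{\Opw(q)(\hh-(E_h+i0))\inv\Opw(\chi) S_h}_{L^{2,\delta}(\R^n)}=O(\sqrt h)$, which pairs by weighted Cauchy--Schwarz with $(\hh-(E_h+i0))\inv\Opw(\chi) S_h\in L^{2,-\delta}(\R^n)$ of the same order.

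The main obstacle is the other three terms, where the naive estimate based on the resolvent bound and weighted Cauchy--Schwarz yields only $O(1/h)$. To overcome this I would introduce an auxiliary $\tilde\chi\in C_0^\infty(\R^{2n})$ with $\tilde\chi\equiv 1$ on $\supp q$ and $\supp\tilde\chi\subset\zoneS_-(R,\nu_0,-\sigma)$, and split
\[
(\hh-(E_h+i0))\inv\Opw(\omega) S_h=\Opw(\tilde\chi)(\hh-(E_h+i0))\inv\Opw(\omega) S_h+(1-\Opw(\tilde\chi))(\hh-(E_h+i0))\inv\Opw(\omega) S_h,
\]
and similarly for the $\Opw(\chi)$-factor in the mixed terms. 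In the first summand, the compact support of $\tilde\chi$ places the factor in $L^{2,\delta}(\R^n)$, so pairing it with $\Opw(q)(\hh-(E_h+i0))\inv\Opw(\omega) S_h\in L^{2,-\delta}(\R^n)$---which has norm $O(h^\infty)$ by Proposition~\ref{prop-incoming}---yields $O(h^\infty)$. For the second summand I would move $(1-\Opw(\tilde\chi))^*$ across the pairing so that it composes with $\Opw(q)$; since $\tilde\chi\equiv 1$ on $\supp q$, the symbols $1-\tilde\chi$ and $q$ as well as all their derivatives have disjoint supports, so the Moyal product $(1-\tilde\chi)\#q$ vanishes modulo $O(h^\infty)\cdot\symb$, and the resulting contribution is again $O(h^\infty)$. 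The mixed terms $I_{\omega\chi}$ and $I_{\chi\omega}$ are treated in the same way. Summing, $\innp{\Opw(q) u_h}{u_h}=O(h)$, which tends to zero as announced.
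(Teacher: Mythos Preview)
Your argument is correct and uses exactly the same ingredients as the paper (Propositions~\ref{prop-loc-E0}, \ref{prop-microloc++}, \ref{prop-microloc-inf} and \ref{prop-incoming}), but the bookkeeping is heavier than necessary. The paper does not split $u_h$ on both sides of the pairing; instead it estimates $\nr{\Opw(q) u_h}_{L^2(\R^n)}$ directly by inserting the incoming cutoff only on the source: writing
\[
\Opw(q) u_h = \Opw(q)(\hh-(E_h+i0))\inv \Op(\omega_-) S_h + \Opw(q)(\hh-(E_h+i0))\inv (1-\Op(\omega_-)) S_h,
\]
the first term is $O(h^{-1})\cdot O(h^{3/2})=O(\sqrt h)$ by Propositions~\ref{prop-microloc++}--\ref{prop-microloc-inf}, and the second is $O(h^\infty)\cdot O(\sqrt h)$ by Proposition~\ref{prop-incoming}. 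This gives $\nr{\Opw(q) u_h}_{L^2}=O(\sqrt h)$ in one stroke. The pairing is then handled by choosing $\tilde q\in C_0^\infty$ supported in the same incoming region with $\tilde q=1$ near $\supp q$, so that $\innp{\Opw(q) u_h}{u_h}=\innp{\Opw(q) u_h}{\Opw(\tilde q) u_h}+O(h^\infty)$ and both factors are $O(\sqrt h)$ in $L^2$.

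Your four-term expansion and the auxiliary $\tilde\chi$-insertion amount to the same thing, but the paper's route avoids the need to separately justify that $(1-\Opw(\tilde\chi))^*\Opw(q)$ is $O(h^\infty)$ as an operator $L^{2,-\delta}\to L^{2,\delta}$ (which is true but requires a word about the Moyal remainder lying in Schwartz class, since one symbol is not compactly supported). Two minor technical points in your write-up: the partition $\omega+\chi=1$ with $\omega$ supported \emph{outside} and $\chi$ supported \emph{inside} the \emph{same} incoming region cannot be smooth --- you need two nested incoming regions; and Propositions~\ref{prop-microloc++} and \ref{prop-incoming} are stated for the standard quantization $\Op$, so either work with $\Op(\chi)$ or note that $\Opw(\chi)-\Op(\chi)=O(h)$ on $L^{2,\delta}$, which costs nothing against $\nr{S_h}_{L^{2,\delta}}=O(\sqrt h)$.
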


\begin{proof}
Let $\s_2,\s_3$ be such that $\s_1 < \s_2 < \s_3 < \s$, $R_2 > R_1$ and $\n_0 \in ]0,(\inf J)/3[$. Let $\o_- \in \symb_0(\R^{2n})$ be supported in $\zoneS_-(R_2,\n_0,-\s_2)$ and equal to 1 on $\zoneS_-(2 R_2,2 \n_0, - \s_3)$. Let $R > 2 R_2$ be chosen large enough and consider $q_- , \tilde q_- \in C_0^\infty(\R^{2n})$ supported in $\zoneS_-(R,0,-\s)\cap p\inv(J)$ and such that $\tilde q_- = 1$ on a neighborhood of $\supp q_-$. If $R$ is large enough we have $\zoneS_-(R,0,-\s)\cap p\inv(J) \subset \zoneS_-(R , 3 \n_0 ,-\s)$, so according to Propositions \ref{prop-microloc++} and \ref{prop-incoming} we have
\begin{eqnarray*}
\lefteqn{\nr{\Opw(q_-) (\hh-(E_h+i0))\inv S_h}_{L^2(\R^n)}}\\
&& \leq \nr{\Opw(q_-) (\hh-(E_h+i0))\inv}_{\Lc(L^{2,\d}(\R^n),L^2(\R^n))} \nr{ \Op(\o_-) S_h}_{L^{2,\d}(\R^n)}\\
&& \quad + \nr{\Opw(q_-) (\hh-(E_h+i0))\inv (1-\Op(\o_-)) \pppg x ^{-\d}}_{\Lc(L^2(\R^n))} \nr{ S_h}_{L^{2,\d}(\R^n)}\\
&& = \bigo h 0 \big( \sqrt h \big).
\end{eqnarray*}
The same applies to $\tilde q_-$ and this finally gives
\[
\abs{\innp{\Opw(q_-)u_h}{u_h}} 
\leq  \nr{\Opw(q_-)u_h} \nr {\Opw(\tilde q_-) u_h} + \bigo h 0 (h^\infty)  \limt h 0 0. \qedhere
\]
\end{proof}

\section{Control of large times and of the source term far from the origin}   \label{sec-approx}

As mentionned in introduction, we expect that the semiclassical measure of $(u_h)_{h\in ]0,h_0]}$ on some bounded subset of $\R^{2n}$ does not depend on the values of the amplitude $A(z)$ for large $\abs z$. When restricting our attention to finite times, this is a consequence of Egorov's Theorem and the following proposition, proved in \cite{art-nondiss} (Proposition 2.1):

\begin{proposition} \label{prop-non-entr}
Let $E_2 \geq E_1 > 0$, $J \subset [E_1,E_2]$ and $\s_3 \in [0,1[$ such that $\s_3^2 E_2 < E_1$. Then there exist $\Rc > 0$ and $c_0>0$ such that
\[
 \forall t \geq 0, \forall (x,\x) \in \zoneS_\pm (\Rc,0,\mp \s_3) \cap p\inv(J), \quad \abs{X(\pm t,x,\x)} \geq c_0 (t + \abs x).
\]
\end{proposition}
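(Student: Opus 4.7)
The plan is to prove this by a convexity-plus-bootstrap analysis of the squared distance to the origin along the flow, reducing the two signs to a single case by time reversal. Note that because $X'=2\Xi$ and $\Xi'=-\nabla V_1(X)$, the map $(t,x,\xi)\mapsto X(-t,x,\xi)=X(t,x,-\xi)$, and $(x,\xi)\in\zoneS_-(\Rc,0,\s_3)$ iff $(x,-\xi)\in\zoneS_+(\Rc,0,-\s_3)$. So I restrict to the forward case $(x,\xi)\in\zoneS_+(\Rc,0,-\s_3)\cap p^{-1}(J)$ and $t\geq 0$, and set $\rho(t)=\tfrac12|X(t,x,\xi)|^2$. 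Using energy conservation $|\Xi|^2=p(x,\xi)-V_1(X)$, I compute
\[
 \rho'(t)=2\innp{X(t)}{\Xi(t)},\qquad \rho''(t)=4\,p(x,\xi)-4V_1(X(t))-2\innp{X(t)}{\nabla V_1(X(t))}.
\]
By \eqref{hyp-decV}, given any $\eps>0$ there is $\Rc_0>0$ such that $|V_1(y)|+|y\cdot\nabla V_1(y)|\leq \eps$ for $|y|\geq\Rc_0$. On the set where $|X(t)|\geq\Rc_0$ one then has $\rho''(t)\geq 4E_1-6\eps$, and moreover $|\xi|^2=p-V_1(x)\leq E_2+\eps$.

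Next I run a bootstrap. Fix $\eps>0$ so small that $\s_3^2(E_2+\eps)<E_1-2\eps$ (possible by the hypothesis $\s_3^2E_2<E_1$), and let $\kappa:=\frac12-\frac{\s_3^2(E_2+\eps)}{4E_1-6\eps}>0$. Assume $|X(s)|\geq\Rc_0$ for $s\in[0,T]$; then
\[
 \rho(t)\geq\tfrac12|x|^2+\rho'(0)\,t+(2E_1-3\eps)\,t^2\geq\tfrac12|x|^2-2\s_3|x||\xi|\,t+(2E_1-3\eps)t^2,
\]
using $\rho'(0)\geq-2\s_3|x||\xi|$ from the outgoing condition. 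Minimizing the right-hand side in $t$ yields $\rho(t)\geq \kappa|x|^2$, hence $|X(t)|\geq\sqrt{2\kappa}|x|$. Choosing $\Rc>\Rc_0/\sqrt{2\kappa}$ strictly, we get $|X(t)|>\Rc_0$ on $[0,T]$ whenever $|x|\geq\Rc$; by continuity the open condition $|X|>\Rc_0$ propagates past every $T$, so the bootstrap closes and $|X(t)|\geq\sqrt{2\kappa}|x|$ holds for all $t\geq 0$.

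It remains to upgrade this uniform lower bound into linear growth in $t$. Since $\rho''(t)\geq 4E_1-6\eps$ globally (we are now always in $|X|\geq \Rc_0$), $\rho'(t)\geq\rho'(0)+(4E_1-6\eps)t$ is non-negative for $t\geq t_*:=\tfrac{2\s_3|x||\xi|}{4E_1-6\eps}\leq C_1|x|$ for some constant $C_1$ depending only on $E_1,E_2,\s_3$. Integrating once more,
\[
 \rho(t)\geq \kappa|x|^2+(2E_1-3\eps)(t-t_*)^2\quad\text{for }t\geq t_*,
\]
so $|X(t)|\geq \max(\sqrt{2\kappa}|x|,\sqrt{E_1}(t-t_*))$ on $[t_*,\infty)$; splitting the ranges $t\leq 2C_1|x|$ and $t\geq 2C_1|x|$ and combining with the earlier bound yields a constant $c_0>0$ with $|X(t)|\geq c_0(|x|+t)$ for all $t\geq 0$.

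The main obstacle is the bootstrap step: because the outgoing condition only says $\innp x\xi\geq-\s_3|x||\xi|$ (the angle between $x$ and $\xi$ can even exceed $\pi/2$), $\rho'(0)$ may be very negative and $\rho$ initially decreasing. The quantitative hypothesis $\s_3^2E_2<E_1$ is exactly what guarantees that the minimum of the relevant downward parabola remains bounded below by a positive multiple of $|x|^2$; getting all the epsilons and the threshold $\Rc$ large enough to close the bootstrap \emph{strictly} is the delicate point.
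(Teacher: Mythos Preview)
The paper does not actually prove this proposition; it merely cites \cite{art-nondiss}, Proposition~2.1. Your argument is therefore more than the paper itself provides, and it is essentially correct: the convexity of $t\mapsto|X(t)|^2$ in the region where $V_1$ is small, together with a bootstrap to keep the trajectory there, is the standard mechanism behind such outgoing estimates and is, in substance, what one finds in the cited reference.

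One minor computational slip worth fixing: minimizing $\tfrac12|x|^2-2\s_3|x||\xi|\,t+(2E_1-3\eps)t^2$ over $t$ gives
\[
\frac12|x|^2-\frac{\s_3^2|x|^2|\xi|^2}{2E_1-3\eps}
\;\geq\;|x|^2\Big(\frac12-\frac{\s_3^2(E_2+\eps)}{2E_1-3\eps}\Big),
\]
so the correct constant is $\kappa=\tfrac12-\dfrac{\s_3^2(E_2+\eps)}{2E_1-3\eps}$, not $\tfrac12-\dfrac{\s_3^2(E_2+\eps)}{4E_1-6\eps}$ as you wrote (you dropped a factor of two in the denominator). This does not affect the argument: the hypothesis $\s_3^2E_2<E_1$ still ensures $\kappa>0$ for $\eps$ small enough, and the bootstrap and linear-growth steps go through verbatim with the corrected value.
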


For $r > 0$ we set
\[
 B_x(r) = \singl{(x,\x)  \st \abs x < r} \subset \R^{2n}.
\]

With Proposition \ref{prop-non-entr} we can check that on a bounded subset of $\R^{2n}$ we can ignore the contribution of the source far from the origin:

\begin{proposition} \label{prop-source-loin}
Let $r > 0$. There exists $R_0 > 0$ such that for $T\geq 0$, $R \geq R_0$, $\Im z \geq 0$ and $q \in C_0^\infty(\R^{2n},[0,1])$ supported in $B_x(r)$ we have
\[
 \nr{\frac ih \int_0^\infty \h_{T,R}(t)\Opw(q) \uh(t,z) ( S_h - \SR )\, dt }_{L^2(\R^n)} = \bigo h 0 (\sqrt h),
\]
where the size of the rest depends on $T$ but not on $z$, $q$ or $R \geq R_0$.
\end{proposition}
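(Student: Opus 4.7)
The plan combines three tools: energy localization via Proposition \ref{prop-loc-E0}, a semiclassical Egorov argument for the dissipative propagator, and the microlocal decay estimate of Proposition \ref{prop-microloc++} applied to the truncated amplitude $A(1 - \Theta(\cdot/R))$.

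\textbf{Energy localization.} Fix a compact neighborhood $I \subset J$ of $E_0$ and $\p \in C_0^\infty(\R,[0,1])$ equal to $1$ near $E_0$ and supported in $I$. Split $q = q_1 + q_2$ with $q_1(x,\x) = q(x,\x)\p(p(x,\x)) \in C_0^\infty(\R^{2n})$ supported in $B_x(r) \cap p\inv(I)$, and $q_2 = q - q_1$ vanishing on $p\inv$ of a neighborhood of $E_0$. Since $\h_{T,R}$ is non-increasing, valued in $[0,1]$, and supported in $(-\infty, T+1]$, Proposition \ref{prop-loc-E0} bounds the operator $\frac{i}{h}\int_0^\infty \h_{T,R}(t) \Opw(q_2) \uh(t,z)\,dt$ in $\Lc(L^2(\R^n))$ uniformly in $h$ and in the admissible $z$. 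Its action on $S_h - \SR$ is therefore controlled by $\nr{S_h - \SR}_{L^2(\R^n)}$. Since the amplitude $A(1-\Theta(\cdot/R))$ satisfies \eqref{hyp-decA} uniformly in $R$, Lemma \ref{estim-sqrt} yields $\nr{S_h - \SR}_{L^{2,\delta}(\R^n)} = \bigo h 0 (\sqrt h)$ uniformly in $R$, so the $q_2$ contribution is $\bigo h 0(\sqrt h)$.

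\textbf{Propagation bound and Egorov.} On $p\inv(I)$ the flow has bounded speed: $|\dot X(s,w)| = 2|\Xi(s,w)| \leq 2\sqrt{\sup I + \nr{V_1}_\infty} =: v_E$. Hence for $w \in \supp q_1$ and $s \in [0,T+1]$, $|X(-s,w)| \leq r + v_E(T+1) =: r_T$; we set $R_0 = r_T + 1$. A standard Egorov argument for the dissipative propagator produces a symbol $\tilde q_t \in C_0^\infty(\R^{2n})$ with $\supp \tilde q_t \subset \vf^{-t}(\supp q_1) \subset B_x(r_T) \cap p\inv(I)$, Schwartz seminorms uniformly bounded for $t \in [0,T+1]$, and
\begin{equation*}
\Opw(q_1)\uh(t,z) = \uh(t,z) \Opw(\tilde q_t) + h\,R_h(t,z),
\end{equation*}
where $\nr{R_h(t,z)}_{\Lc(L^2(\R^n))}$ is bounded uniformly in $h$, $t \in [0,T+1]$, and $\Im z \geq 0$. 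Boundedness of $V_2$ together with $\Im z \geq 0$ gives $\nr{\uh(t,z)}_{\Lc(L^2(\R^n))} \leq e^{(T+1)\nr{V_2}_\infty} =: C_T$.

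\textbf{Microlocal estimate and conclusion.} For $R \geq R_0$, the amplitude $A(1-\Theta(\cdot/R))$ is supported in $\{|z| \geq R \geq r_T + 1\}$, while $\supp \tilde q_t \subset B_x(r_T)$; hence $|x - z| \geq 1$ for every $(x,\x) \in \supp \tilde q_t$ and every $z$ in the support of the amplitude, so the hypothesis of Proposition \ref{prop-microloc++} holds with $\e = 1$. The constants in \eqref{hyp-decA} being uniform in $R$, and using $\Opw = \Op + \bigo h 0(h)$ on $L^{2,\delta}(\R^n)$, we obtain
\[
\nr{\Opw(\tilde q_t)(S_h - \SR)}_{L^{2,\delta}(\R^n)} = \bigo h 0 (h^{3/2})
\]
uniformly in $t \in [0,T+1]$ and $R \geq R_0$. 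Combining with the Egorov expansion and the bound on $\uh(t,z)$ yields $\nr{\Opw(q_1)\uh(t,z)(S_h - \SR)}_{L^2(\R^n)} = \bigo h 0(h^{3/2})$ uniformly, and integrating over $t \in [0,T+1]$ and dividing by $h$ gives the desired $\bigo h 0(\sqrt h)$. The main obstacle is the uniformity bookkeeping: the Egorov remainder, the symbol norms of $\tilde q_t$, and the constants in Propositions \ref{prop-microloc++} and \ref{prop-loc-E0} must all be uniform in $R \geq R_0$, $\Im z \geq 0$, and $t \in [0,T+1]$; this follows from the stability of \eqref{hyp-decA} under multiplication by $1-\Theta(\cdot/R)$ and from restricting to the bounded time interval $[0,T+1]$.
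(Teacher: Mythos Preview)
Your argument contains a genuine gap: the quantity $R_0$ you construct depends on $T$. You set $R_0 = r_T + 1 = r + v_E(T+1) + 1$, which grows with $T$, whereas the proposition asserts the existence of a single $R_0$ valid for \emph{all} $T \geq 0$. This $T$-independence is not cosmetic: it is used repeatedly afterwards (Propositions \ref{tps-grands} and \ref{prop-lim-mes++}, and Lemma \ref{lem-tps-gd}), where $R_0$ is fixed first from $r$ alone and then $T_0$ is chosen large depending on $R_0$. A $T$-dependent $R_0$ would make these arguments circular.

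The finite-speed bound $|X(-s,w)| \leq r + v_E s$ you invoke is not sharp enough: backward trajectories from $\supp q_1$ can indeed travel arbitrarily far in $x$ as $t \to +\infty$, so a naive application of Egorov cannot give a uniform $R_0$. The paper circumvents this with Proposition \ref{prop-non-entr}: points in the outgoing region $\zoneS_+(\tilde\Rc,0,-\s_3)\cap p^{-1}(J)$ satisfy $|X(t,\cdot)| \geq c_0(t+|x|)$ for all $t \geq 0$, so once $\tilde\Rc > 2r/c_0$ their forward trajectories never meet $B_x(r)$ at any positive time. Concretely, the paper inserts operators $B_1 = \Opw(1-\th\circ p)$, $B_2 = \Opw(\th\circ p)\Opw(\o)$, $B_3 = \Opw(\th\circ p)\Opw(1-\o)$ between the propagator and the source, with $\o$ supported in an outgoing region. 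The $B_1$ piece is handled as you do (Proposition \ref{prop-loc-E0}); the $B_2$ piece is $O(h^\infty)$ by Egorov and Proposition \ref{prop-non-entr}, with the geometric disjointness holding for all $t \geq 0$; and only then is Proposition \ref{prop-microloc++} applied to $\Opw(1-\o)(S_h - \SR)$, after setting $R_0 = 3\tilde\Rc$ independently of $T$. Your energy localization and microlocal steps are correct in spirit, but you are missing the outgoing-region argument that decouples $R_0$ from $T$.
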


We recall that $\h_{T,R} \in C^\infty(\R, [0,1])$ is non-increasing, equal to 1 on $]-\infty, T]$ and equal to 0 on $[T+1,+\infty[$.

\begin{proof}
Let $\Rc$ and $c_0$ be given by Proposition \ref{prop-non-entr} applied with $\s_3 = (1+ \s_1)/2$ (which is allowed according to Assumption \eqref{hyp-J}). Let $\tilde \Rc > \max (\Rc , 2r/c_0, R_1)$ be so large that
\[
\n_0^2 := \inf J - \sup_{\abs x \geq \tilde \Rc}\abs{V_1(x)} >0.
\]
Let $\s_2 \in ]\s_1,\s_3[$.
We consider $\o \in \symbor(\R^{2n})$ supported in $\zoneS_+ (\tilde \Rc , 0, -\s_3)$ and equal to 1 in a neighborhood of $\zoneS_+ (2 \tilde \Rc , 0 , -\s_2)$. Given $\th \in C_0^\infty(\R)$ supported in $J$ and equal to 1 on a neighborhood of $E_0$ we prove that
\[
\sum_{j=1}^3 \nr{\frac ih \int_0^\infty \h_{T,R}(t)\Opw(q) \uh(t,z) B_j(h) ( S_h - \SR )\, dt }_{L^2(\R^n)} = \bigo h 0 \big( \sqrt h \big ) , 
\]
where
\[
B_1(h) =  \Opw(1-\th\circ p), \quad B_2(h) = \Opw(\th\circ p) \Opw(\o) ,\quad B_3(h) = \Opw(\th\circ p) \Opw(1-\o).
\]
The first term is estimated with Propositions \ref{prop-loc-E0} and \ref{prop-norme-S}. The function $\h_{T,R}$ depends on $R$, but since it is non-increasing and always vanishes on $[T+1,+\infty[$, we can check that the estimate is actually uniform in $R > 0$.
According to Proposition \ref{prop-non-entr} and Egorov's Theorem (see Theorem 7.2 in \cite{art-nondiss}), the second term is of size $O(h^\infty)$, again uniformly in $R \geq 0$. 
For the third term we set $R_0 = 3 \tilde \Rc$ and choose $R \geq R_0$. The symbol $(\th \circ p) (1- \o)$ is supported in 
\[
B_x(2 \tilde \Rc) \cup \zoneS_-(2 \tilde R , 0, -\s_2) \cap p\inv(J) \subset B_x(2 \tilde \Rc) \cup \zoneS_-(2 \tilde R , \n_0, -\s_2).
\]
Since $A- A_R$ is supported outside $B(0,3\tilde \Rc)$ and according to Propositions \ref{prop-microloc++} and \ref{prop-microloc-inf} (applied with $A-A_R$ instead of $A$) we obtain that $\nr{\Op(1-\o) (S_h-S_h^R)}_{L^2(\R^n)} = O\big( h^{\frac 32} \big)$ uniformly in $R$.
\end{proof}

We know that \eqref{lim-opquu} holds for some measure $\m_T^R$ when $u_h$ is replaced by $u_h^{T,R}$ (see Theorem 4.3 in \cite{art-mesure}). We now check that $u_h^{T,R}$ is actually a good approximation of $u_h$ (in some sense) when $h > 0$ is small and $T,R$ are large enough:

\begin{proposition}  \label{tps-grands}
Let $r > 0$ and $R_0 > 0$ given by Proposition \ref{prop-source-loin}.
Let $K$ be a compact subset of $B_x(r) \cap p\inv(J)$ and $\e > 0$.
Then there exists $T_0 \geq 0$ such that for $q \in C_0^\infty(\R^{2n})$ supported in $K$, $T \geq T_0$ and $R \geq R_0$ we have
\[
\limsup_{h\to 0}  \abs{\innp{\Opw(q)u_h}{u_h} - \innp{\Opw(q) \uhtr}{\uhtr}} \leq \e \nr q _\infty.
\]
\end{proposition}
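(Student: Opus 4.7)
The strategy is to decompose $u_h - \uhtr$ into a ``missing source'' contribution involving $S_h - \SR$ and a ``missing times'' contribution for $t \geq T$. The first is controlled by Proposition~\ref{prop-source-loin}; the second is the main technical step and is handled through classical dynamics together with the damping assumption~\eqref{hyp-amfaible}.

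Via the limiting absorption principle one has $u_h = \frac{i}{h}\int_0^\infty \uh(t,E_h)\, S_h\,dt$ (after regularizing $\Im E_h > 0$ and passing to the limit in $L^{2,-\d}(\R^n)$). Combined with the definition of $\uhtr$ this yields $u_h - \uhtr = A_h + B_h$ where
\[
A_h := \frac{i}{h}\int_0^\infty \h_{T,R}(t)\, \uh(t,E_h)(S_h - \SR)\,dt, \qquad B_h := \frac{i}{h}\int_0^\infty \big(1-\h_{T,R}(t)\big)\, \uh(t,E_h)\, S_h\,dt.
\]
A polarization identity then reduces the problem to controlling the four pairings $\innp{\Opw(q)(A_h + B_h)}{u_h}$ and $\innp{\Opw(q)\uhtr}{A_h + B_h}$.

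The $A_h$-pairings are handled directly by Proposition~\ref{prop-source-loin}, which yields $\nr{\Opw(q) A_h}_{L^2(\R^n)} = \bigo h 0 (\sqrt h)$ uniformly in $R \geq R_0$, and similarly when $\Opw(q)^*$ is applied on the other factor. Combined with the weighted estimate $\nr{u_h}_{L^{2,-\d}(\R^n)} + \nr{\uhtr}_{L^{2,-\d}(\R^n)} = O(h^{-1/2})$ and the fact that $\Opw(q)$ with $q$ supported in the compact set $K$ provides arbitrary polynomial decay in $x$ modulo $O(h^\infty)$, these four $A_h$-contributions vanish as $h \to 0$ for any fixed $T \geq 0$ and $R \geq R_0$.

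The main step is the $B_h$-estimate, which selects the threshold $T_0$. By Propositions~\ref{prop-microloc++} and~\ref{prop-microloc-inf}, $S_h$ is microlocally concentrated on a neighborhood of $\negg$ up to $O(h^\infty)$; Egorov's theorem then rewrites $\Opw(q)\,\uh(t,E_h)$ as $\uh(t,E_h)\,\Opw(q\circ \vf^t)$ plus controlled remainders. To close the estimate uniformly in $R \geq R_0$, I split once more $S_h = S^{R'}_h + (S_h - S^{R'}_h)$ with $R' \gg R_0$ chosen large enough that, by Proposition~\ref{prop-non-entr} combined with~\eqref{hyp-non-incoming}, no classical trajectory starting on $\negg$ with $z$ outside $B(0,R')$ can meet $K$ in any finite time. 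The residue $S_h - S^{R'}_h$ is absorbed by one more application of Proposition~\ref{prop-source-loin}, while for the compactly supported piece $S^{R'}_h$ we are reduced to the bounded-submanifold setting of~\cite{art-nondiss}: trajectories that remain in a bounded set lie in $\O_b(\singl{E_0})$, on which~\eqref{hyp-amfaible} guarantees a strictly positive cumulative damping $\int_0^t V_2(X(s,w))\,ds$ for $t$ large, which suppresses the surviving $B_h$-pairings to at most $\e\nr{q}_\infty$ once $T \geq T_0(K, \e, R_0)$.

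\textbf{Main obstacle.} The critical technical point is the uniformity in $R \geq R_0$ of the large-time control: the damping argument becomes standard only after compactifying the amplitude, and the choice of $T_0$ (equivalently, of the secondary cutoff $R'$) must depend only on $K$ and $\e$, not on $R$. This uniformity comes from~\eqref{hyp-non-incoming} via Proposition~\ref{prop-non-entr}, and it is precisely the mechanism by which one expects $\m$ to agree with $\tilde \m_{R_K}$ on any compact $K$, as announced in the introduction.
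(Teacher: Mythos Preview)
Your overall strategy---splitting $u_h - \uhtr$ into a missing-source piece $A_h$ and a large-time piece $B_h$, then invoking Proposition~\ref{prop-source-loin} for the first and damping for the second---matches the paper's plan. However, the treatment of $B_h$ has a genuine gap.

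You propose to handle $B_h = \frac i h \int_0^\infty (1-\h_{T,R}(t))\,\uh(t,E_h)\,S_h\,dt$ by applying Egorov's theorem at each time $t$ to rewrite $\Opw(q)\uh(t,E_h)$ as $\uh(t,E_h)\Opw(q\circ\vf^t)$ and then appeal to classical damping. This does not close: the remainder in Egorov's theorem grows with $t$, so integrating over $[T,\infty)$ is not justified; moreover the propagator $\uh(t)$ is \emph{not} a contraction here (since $V_2$ may change sign), so the classical damping inequality $\int_0^t V_2(X(s,w))\,ds>0$ does not by itself suppress the quantum integral. The paper avoids both issues by factoring rather than integrating: one writes $B_h$ essentially as $\uh(T,z)(\hh-z)\inv S_h$ minus a short-time piece, applies Egorov at the \emph{single} time $T$, and then uses the resolvent estimate together with the incoming-region machinery (Propositions~\ref{prop-microloc++}--\ref{prop-incoming}) to reduce $\Opw(q)\uh(T,z)(\hh-z)\inv S_h$ to $A_T^\d(h)\Opw(Q) u_h$ plus $O(\sqrt h)$, where $A_T^\d(h) = \Opw(\tilde q)\uh(T)\Opw(\tilde Q)$. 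The smallness of this operator for large $T$ is precisely Proposition~\ref{prop-super-egorov}, which is the quantum manifestation of \eqref{hyp-amfaible} and is the key ingredient your argument is missing. This is the content of Lemma~\ref{lem-tps-gd}.

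A second, related issue: your $A_h$-pairings give only $O(1)$, not $o(1)$. You obtain $\nr{\Opw(q)A_h}_{L^2}=O(\sqrt h)$ from Proposition~\ref{prop-source-loin}, but pairing against $u_h$ requires a bound on $\nr{\Opw(\tilde q)u_h}_{L^2}$, and a priori one only has $\nr{u_h}_{L^{2,-\d}}=O(h^{-1/2})$, yielding $O(1)$ for the cross term. The paper circumvents this circularity through the structure of Lemma~\ref{lem-tps-gd}: the correction $A_T^\d(h)\Opw(Q)u_h$ is kept explicit, and the passage from the $L^2$ estimate of the lemma to the quadratic-form estimate of Proposition~\ref{tps-grands} is a bootstrap carried out exactly as in \cite[Proposition~5.5]{art-mesure}.
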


This proposition relies on the following consequence of the non-selfadjoint version of Egorov's Theorem (see \cite[Prop.7.3]{art-nondiss}):

\begin{proposition}  \label{prop-super-egorov}
Let $J$ be a neighborhood of $E$ such that Assumption \eqref{hyp-amfaible} holds for all $\l \in J$. Let $K_1$ and $K_2$ be compact subsets of $p\inv (J)$. Let $\e > 0$. Then there exists $T_0 \geq 0$ such that for $T \geq T_0$ and $q_1,q_2 \in C_0^\infty(\R^{2n})$ respectively supported in $K_1$ and $K_2$ we have
\[
\limsup_{h \to 0} \nr{\Opw(q_1) U_h(T) \Opw(q_2)}_{\Lc(L^2(\R^n))} \leq \e \nr {q_1}_\infty \nr {q_2}_\infty.
\]
\end{proposition}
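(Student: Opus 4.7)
The plan is to reduce the operator-norm estimate to a supremum-norm bound on an explicit symbol via the non-self-adjoint Egorov theorem, and then to exploit Assumption \eqref{hyp-amfaible} to make that symbol small when $T$ is large. After replacing $q_1,q_2$ by $|q_1|, |q_2|$ and rescaling I may assume $q_1,q_2$ real-valued with $\nr{q_j}_\infty = 1$. Setting $A = \Opw(q_1) U_h(T) \Opw(q_2)$ and using $\nr{A}^2 = \nr{A^*A}$ together with the Weyl calculus gives
\[
A^*A = \Opw(q_2)\, U_h(T)^* \Opw(q_1^2) U_h(T)\, \Opw(q_2) + O(h).
\]

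Next, I apply the non-self-adjoint Egorov theorem from \cite{art-nondiss}, obtained by differentiating $t \mapsto U_h(t)^* \Opw(q) U_h(t)$ and using $H_h^* - H_h = 2ihV_2$: for each fixed $T$,
\[
U_h(T)^* \Opw(q_1^2) U_h(T) = \Opw(a_T) + o_T(1) \text{ in } \Lc(L^2(\R^n)),
\]
where $a_T(x,\x) = q_1(\vf^T(x,\x))^2 \, \exp\!\Big(-2 \int_0^T V_2(X(s,x,\x))\,ds\Big)$. Composing with $\Opw(q_2)$ on both sides and invoking Calder\'on--Vaillancourt, this yields
\[
\limsup_{h\to 0} \nr{\Opw(q_1) U_h(T) \Opw(q_2)}^2 \leq \nr{b_T}_\infty, \qquad b_T := q_2^2 \, a_T.
\]
So it remains to show that $\nr{b_T}_\infty$ can be made arbitrarily small by taking $T$ large.

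Since $b_T$ is supported in $K_2 \cap \vf^{-T}(K_1) \subset p\inv(J)$, I split the analysis in $K_2$ according to whether the classical trajectory is bounded. For $w \in K_2$ outside the trapped set $\O_b(J)$, the orbit leaves any compact set, so $\vf^T(w) \notin K_1$ once $T$ is large enough, and a compactness-and-continuity argument makes this escape time uniform on any closed subset of $K_2 \setminus \O_b(J)$. For $w \in K_2 \cap \O_b(J)$, Assumption \eqref{hyp-amfaible} gives, for each such $w$, a finite $T_w$ with $\int_0^{T_w} V_2(X(s,w))\,ds > 0$; by compactness of the trapped set, a finite covering together with an iteration along the flow (using the invariance of Liouville measure on each energy shell and the openness of the positivity condition) upgrades this pointwise existential statement to the uniform divergence
\[
\inf_{w \in K_2 \cap \O_b(J)} \int_0^T V_2(X(s,w))\,ds \limt T {+\infty} +\infty.
\]
Combining the two regimes via a partition of unity tuned to $\e$ then yields the required uniform bound on $\nr{b_T}_\infty$. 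The main obstacle is this last uniform divergence: because $V_2$ may change sign, \eqref{hyp-amfaible} alone is pointwise and merely existential, and promoting it to a uniform large-$T$ divergence is the technical heart --- precisely what Proposition~7.3 of \cite{art-nondiss} packages.
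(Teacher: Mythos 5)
The statement you set out to prove is not actually proved in the paper: it is imported verbatim from \cite[Prop.7.3]{art-nondiss}, and the text offers nothing beyond that citation. Your sketch follows the strategy of that reference --- reduce the operator norm to a sup-norm bound on the transported symbol via the non-selfadjoint Egorov theorem (which indeed produces the damping factor $e^{-2\int_0^T V_2(X(s,\cdot))\,ds}$), then make the symbol small for large $T$ by combining escape of non-trapped trajectories with accumulated absorption on the trapped set --- and you correctly identify the uniform divergence of the damping integral as the crux. So your route is the expected one, and you are honest that the hard step is exactly what the cited proposition packages.

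Two soft spots would need attention if this were to stand as a self-contained proof. First, the gluing of the two regimes is where the real work lies: the escape time is \emph{not} uniform on $K_2\setminus\O_b(J)$ (it blows up as $w$ approaches $\O_b(J)$), so a partition of unity does not by itself close the argument. One must show that a point which stays in a compact set up to time $T$ without being trapped shadows trapped trajectories long enough to accumulate comparable damping; presenting the partition of unity as resolving this hides the actual content of \cite[Prop.7.3]{art-nondiss}. Since the paper itself does no more than cite that result, this is not a defect relative to the paper, but the claim that compactness makes the escape time ``uniform on any closed subset of $K_2\setminus\O_b(J)$'' is true yet insufficient, because those closed subsets do not exhaust a neighborhood of $\O_b(J)$. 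Second, ``invariance of Liouville measure'' is a red herring: the iteration along the flow uses only the flow-invariance and compactness of $\O_b(J)$ together with the continuity of $w\mapsto\int_0^{T}V_2(X(s,w))\,ds$, which upgrade the pointwise, existential Assumption \eqref{hyp-amfaible} to a lower bound of the form $\int_0^T V_2(X(s,w))\,ds\geq \alpha\lfloor T/\tau\rfloor - C$ on the trapped set; no measure-theoretic input enters.
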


We also need the following result about the classical flow:

\begin{proposition} \label{prop-BzoneS}
 Let $E_1,E_2 \in \R_+^*$ be such that $E_1 \leq E_2$, and $\s \in [0,1[$. If $\Rc$ is chosen large enough then for any compact subset $K$ of $p\inv([E_1,E_2])$ there exists $T_0 \geq 0$ such that 
\[
 \forall w \in K, \forall t \geq T_0, \quad \vf^{\pm t}(w) \in B_x(\Rc) \cup \zoneS_ \pm (\Rc,0,\pm \s).
\]
\end{proposition}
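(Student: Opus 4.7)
The plan is to prove the forward statement; the backward case follows from the time-reversal symmetry $\varphi^{-t}(x,\xi) = \varphi^t(x,-\xi)$. The key tool is the Lyapunov function $b(x,\xi) = \langle x,\xi\rangle$, whose Poisson bracket with $p$ is $H_p b = 2|\xi|^2 - \langle x,\nabla V_1(x)\rangle$. Using \eqref{hyp-decV}, I would first choose $\Rc$ large enough that $|V_1(x)| + |x||\nabla V_1(x)| \leq \varepsilon$ on $\{|x|\geq \Rc\}$ for some small $\varepsilon \ll E_1$, so that $H_p b \geq E_1$ on $\{|x| \geq \Rc\} \cap p^{-1}([E_1,E_2])$. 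A parallel computation for $\cos\theta = b/(|x||\xi|)$ gives $H_p \cos\theta \geq 2|\xi|\sin^2\theta/|x| + O(\langle x\rangle^{-1-\rho})$, which is strictly positive on $\{|x|\geq \Rc\}$ away from $\sin\theta = 0$. In particular $\mathcal{Z}_+(\Rc,0,\sigma)$ is forward-invariant at these energies.

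From this I would derive two consequences about individual trajectories. First, once a trajectory crosses $\partial B_x(\Rc)$ outward (with $\langle X,\Xi\rangle \geq 0$), the strict monotonicity of $b$ forces $\langle X,\Xi\rangle > 0$ thereafter on $\{|X|\geq\Rc\}$, so $(|X|^2)' > 0$ and the trajectory never re-enters $\overline{B_x(\Rc)}$. Second, on this outgoing exterior portion, the estimates $|X(t)| \leq \Rc + 2\sqrt{E_2+\varepsilon}\,(t - t_{\mathrm{exit}})$ and $(\cos\theta)' \geq c(1-\sigma^2)/|X(t)|$ valid while $\cos\theta \leq \sigma$, produce after integration a universal time $T_* = T_*(\Rc, E_1, E_2, \sigma) < \infty$ after which $\cos\theta \geq \sigma$. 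Setting $t_{\mathrm{exit}}(w) \in [0,+\infty]$ to be the first outward-crossing time of $\partial B_x(\Rc)$ (with $t_{\mathrm{exit}}(w) = +\infty$ for forever-trapped orbits), we obtain $\varphi^t(w) \in B_x(\Rc) \cup \mathcal{Z}_+(\Rc, 0, \sigma)$ for every $t \geq t_{\mathrm{exit}}(w) + T_*$ (trivially when $t_{\mathrm{exit}}(w)=+\infty$, since then $\varphi^t(w) \in \overline{B_x(\Rc)}$ for all $t\geq 0$).

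The main obstacle is to extract a uniform bound on $t_{\mathrm{exit}}$ over $K$. Upper semicontinuity of $t_{\mathrm{exit}}$ on $p^{-1}([E_1,E_2])$ follows from continuity of the flow together with the transversality $\langle X,\Xi\rangle > 0$ at each outward crossing. Combined with the observation that, for $\Rc$ large enough, the forward-trapped set $\mathcal{T}_{\Rc} := \{t_{\mathrm{exit}}=+\infty\}\cap p^{-1}([E_1,E_2])$ is a compact subset strictly contained in $B_x(\Rc)$ (the long-range decay $V_1 \to 0$ rules out trapping at arbitrarily large distances when $E_1 > 0$, so the forward-trapped set at these energies projects into a fixed bounded region), a compactness argument on $K$ based on the exhaustion $p^{-1}([E_1,E_2]) = \mathcal{T}_{\Rc} \cup \bigcup_{T>0}\{t_{\mathrm{exit}} \leq T\}$ by closed pieces yields a finite bound $T_1 \geq 0$ with $t_{\mathrm{exit}}(w) \leq T_1$ for every $w \in K\setminus \mathcal{T}_{\Rc}$. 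The conclusion holds with $T_0 := T_1 + T_*$, and the case of backward times is identical via time reversal.
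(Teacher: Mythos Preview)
Your approach via the escape function $b=\langle x,\xi\rangle$ and the angle $\cos\theta$ is exactly the mechanism the paper uses, and your first two paragraphs are fine. The gap is in the final compactness step. You claim that upper semicontinuity of $t_{\mathrm{exit}}$ together with compactness of $K$ yields a finite bound $T_1$ with $t_{\mathrm{exit}}(w)\le T_1$ for all $w\in K\setminus\mathcal T_{\Rc}$. This is false whenever $K$ meets a neighbourhood of the trapped set $\mathcal T_{\Rc}$: along any sequence $w_n\to w\in\mathcal T_{\Rc}$ with $w_n\notin\mathcal T_{\Rc}$ one has $t_{\mathrm{exit}}(w_n)\to+\infty$ (upper semicontinuity gives no obstruction here since $t_{\mathrm{exit}}(w)=+\infty$). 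For such $w_n$ the trajectory spends the interval $[t_{\mathrm{exit}}(w_n),\,t_{\mathrm{exit}}(w_n)+T_*]$ outside $B_x(\Rc)$ with $\cos\theta<\sigma$, i.e.\ outside $B_x(\Rc)\cup\zoneS_+(\Rc,0,\sigma)$. Since these bad intervals drift to infinity, no uniform $T_0$ can work with a single radius.

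The paper repairs this with a two-radius device: choose first $\Rc_0$ so that the estimates on $b'$ and $(\cos\theta)'$ hold on $\{|x|\ge\Rc_0\}$, then set $\Rc=\Rc_0+4\tau\sqrt{E_2}$ with $\tau$ chosen so that the integral of the lower bound for $(\cos\theta)'$ over a crossing of the annulus $\Rc_0\le|x|\le\Rc$ already exceeds $\sigma$. The point is that a trajectory leaving $B_x(\Rc_0)$ must have $\cos\theta\ge\sigma$ \emph{by the time it reaches} $|x|=\Rc$, so the ``bad'' transition happens entirely inside $B_x(\Rc)$ and there is no bad interval at all. One then shows that \emph{every} trajectory (trapped or not) enters the open set $\mathcal U_+:=B_x(\Rc_0)\cup\zoneS_+(\Rc,0,\sigma)$ in finite time; since this entrance time is finite everywhere and upper semicontinuous, compactness of $K$ gives the uniform $T_0$. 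Your argument becomes correct once you introduce this inner radius (equivalently: enlarge the ball in the conclusion to absorb your $T_*$-transition).
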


This is slightly more general than Lemma 5.2 in \cite{art-mesure}. We recall the idea of the proof:

\begin{proof}
We consider $\Rc_0$ such that
\[
\forall x \in \R^n, \quad \abs x \geq \Rc_0 \quad \impl \quad \abs{V_1(x)} + \abs x \abs {\nabla V_1(x)} \leq \frac {E_1}3 (1-\s^2).
\]
Let $\t > 0$ be such that $\int_0^\t \frac {(1-\s^2) \sqrt {E_1}}{\sqrt 3 ( \Rc_0 + 4s \sqrt {E_2})} \, ds > \s$. We set $\Rc = \Rc_0 + 4\t \sqrt {E_2}$ and $\Uc_\pm = B_x(\Rc_0) \cup \zoneS_\pm(\Rc , 0 , \pm \s)$.
We first prove that for any $w \in K$, if $\vf^{\pm t}(w) \in \Uc_\pm$ for some $t_w \geq 0$, then $\vf^{\pm t}(w) \in  B_x(\Rc) \cup \zoneS_ \pm (\Rc,0,\pm \s)$ for all $t \geq t_w$. Since $\vf^{\pm (t-t_w)}$ maps $\zoneS_\pm(\Rc , 0 , \pm \s)$ into itself for all $t \geq t_w$, we can assume that $\abs{X(\pm t_w,w)} = \Rc_0$, $\abs {X(\pm t,w)} = \Rc$ and $\abs{X(\pm s,w)} \in [\Rc,\Rc_0]$ for $s \in [t_w,t]$. Assume by contradiction that $\vf^{\pm s}(w) \notin \zoneS_ \pm (\Rc,0,\pm \s)$ when $s \in [t_w,t]$. Then we can check that 
\[
 \pm \frac {\partial}{\partial s} \frac {X(\pm s,w ) \cdot \Xi (\pm s,w )}{\abs{X(\pm s,w )} \abs {\Xi(\pm s,w)}} \geq \frac {(1-\s^2) \sqrt {E_1}} {\sqrt 3 (\Rc_0 + 4(s-t_0)\sqrt {E_2})},
\]
which gives a contradiction. Then it only remains to check that if $T_0$ is chosen large enough, then for all $w \in K$ we can find $t_w  \in [0,T_0]$ such that $\vf^{\pm t_w}(w) \in \Uc_\pm$. For this we use compactness of $K$ and the fact that any trajectory has a limit point in $\O_b([E_1,E_2]) \subset B_x(\Rc_0)$ or goes to infinity and meets $ \zoneS_\pm(\Rc , 0 , \pm \s)$ when $t$ is large enough.
\end{proof}

Let $\s_2 < \s_3 < \s_4 < \s_5 \in ]\s_1,1[$ and $\nu_0 \in ]0, \sqrt{\inf J}/4[$. Let $\Rc$ be given by Proposition \ref{prop-incoming} applied with $(\tilde R , \tilde \nu ,\tilde \s) = (3R_1 , 2\n_0 ,\s_3)$ and $(\nu , \s) = (3 \nu_0 , \s_4)$ . Choosing $\Rc$ larger if necessary, we can assume that $\abs \x \geq 4 \nu_0$ if $p (x,\x) \in J$ and $\abs x \geq \Rc$. We can also assume that  $2\Rc$ satisfies the conclusion of Proposition \ref{prop-BzoneS} applied with $[E_1,E_2] \supset J$.

\begin{lemma} \label{lem-tps-gd}
Let $r \geq 4 \Rc$ and $R_0 > 0$ given by Proposition \ref{prop-source-loin}. Let $Q \in C_0^\infty(\R^{2n},[0,1])$ be supported in $B_x(r) \cap p\inv(J)$ and equal to 1 in a neighborhood of $  B_x(3\Rc)\cap p\inv(\bar I)$ for some open neighborhood $I$ of $E_0$. Let $K$ be a compact subset of $B_x(r) \cap p\inv (J)$ and $\d > 0$.
Then there exists $T_0 \geq 0$ such that for $T \geq T_0$, $R \geq R_0$ and $q \in C_0^\infty(\R^{2n})$ supported in $K$ we have
\[
\limsup_{h\to0} \nr{\Opw(q) \left(u_h -  \uhtr - A_T^\d(h) \Opw (Q) u_h \right) }_{L^2(\R^n)} \leq \d \nr q _\infty,
\]
where $A_T^\d(h)$ is a bounded operator such that
\[
\forall T \geq T_0,\quad \limsup_{h\to 0} \nr{A_T^\d(h)}_{\Lc(L^2(\R^n))} \leq \d.
\]
\end{lemma}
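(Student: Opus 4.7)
My plan is to express $u_h - u_h^{T,R}$ as $U_h(T)$ applied to a quantity close to $u_h$, modulo an $O_{L^2}(\sqrt h)$ correction, and then to show that after $\Opw(q)$ is applied, the only microlocally non-negligible contribution is the part captured by $\Opw(Q) u_h$.

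\textbf{Duhamel and time-shift.} Starting from $u_h = \frac{i}{h} \int_0^\infty U_h(t, E_h+i0) S_h\,dt$ and the definition of $u_h^{T,R}$, I decompose
\[
u_h - u_h^{T,R} = \frac{i}{h} \int_0^\infty (1-\eta_{T,R}(t)) U_h(t) S_h\,dt + \frac{i}{h} \int_0^\infty \eta_{T,R}(t) U_h(t)(S_h - S_h^R)\,dt.
\]
Proposition \ref{prop-source-loin} yields $O_{L^2}(\sqrt h)$ for the second integral after multiplication by $\Opw(q)$, uniformly in $R \geq R_0$. In the first, the change of variables $t = s+T$ and the vanishing of $1 - \eta_{0,R}$ on $]-\infty,0]$ give
\[
\frac{i}{h} \int_0^\infty (1-\eta_{T,R}(t)) U_h(t) S_h\,dt = U_h(T)\bigl(u_h - u_h^{0,R}\bigr) + O_{L^2}(\sqrt h),
\]
by a further application of Proposition \ref{prop-source-loin} (with $T = 0$) to replace $S_h$ by $S_h^R$ inside the inner integral.

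\textbf{Microlocal cleanup and identification of $A_T^\delta$.} The family $u_h^{0,R}$ is microlocally supported in a compact subset of $p\inv(J)$ (bounded-time propagation of $S_h^R$, up to a high-energy contribution handled by Proposition \ref{prop-loc-E0}). Proposition \ref{prop-super-egorov} then gives $\nr{\Opw(q) U_h(T) u_h^{0,R}}_{L^2} \leq (\delta/3) \nr{q}_\infty$ for $T$ large. For $\Opw(q) U_h(T) u_h$, I split $u_h = \Opw(Q) u_h + (\Id - \Opw(Q)) u_h$, and on the support of $1-Q$ I decompose into a high-energy piece (Proposition \ref{prop-loc-E0}), an incoming piece in $\zoneS_-(R,0,-\s)$ with $\s > \s_1$ (Proposition \ref{prop-mes-zoneS}), an outgoing piece in $\zoneS_+$ whose forward $\vf^T$-image avoids $K \subset B_x(r)$ by Proposition \ref{prop-non-entr} (yielding an $h^\infty$ remainder via Proposition \ref{prop-incoming}), and a bounded piece in $p\inv(J)$ away from $\supp Q$ (Proposition \ref{prop-super-egorov}); each contributes at most $(\delta/12)\nr{q}_\infty + o_h(1)$. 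Setting $A_T^\delta(h) := U_h(T) \Opw(\tilde Q)$ with $\tilde Q \in C_0^\infty(\R^{2n})$ supported in $p\inv(J)$ and equal to $1$ on $\supp Q$, the symbolic calculus gives $A_T^\delta(h) \Opw(Q) u_h = U_h(T) \Opw(Q) u_h + O_{L^2}(h^\infty)$, so combining everything yields the first estimate.

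\textbf{Main obstacle.} The critical point is the $L^2$ operator-norm bound $\limsup_h \nr{A_T^\delta(h)}_{\Lc(L^2(\R^n))} \leq \delta$. One splits $\tilde Q$ between a neighborhood of the trapped set $\O_b(\singl{E_0})$ --- on which Assumption \eqref{hyp-amfaible} and Egorov's theorem combined with the energy identity $\frac{d}{dt} \nr{U_h(t) f}^2 = -2 \innp{V_2 U_h(t) f}{U_h(t) f}$ force a uniform exponential dissipation of $\nr{U_h(T) \Opw(\cdot)}$ as $T \to \infty$ --- and its complement, on which classical trajectories enter the outgoing region in uniformly bounded time by Proposition \ref{prop-BzoneS}, so that the corresponding microlocal $L^2$ mass disperses. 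The interplay between the absorption hypothesis on the trapped set and the non-trapping escape at infinity is the most delicate ingredient.
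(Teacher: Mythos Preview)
Your overall architecture---Duhamel/time-shift to rewrite $u_h - u_h^{T,R}$ as $U_h(T)$ acting on something close to $u_h$, then microlocal splitting via $Q$---is exactly the route the paper takes. The handling of the remainder pieces (energy cutoff via Proposition~\ref{prop-loc-E0}, incoming region, outgoing region by Egorov and Proposition~\ref{prop-BzoneS}) is also the same in spirit, though you misattribute the outgoing estimate to Proposition~\ref{prop-incoming}, which concerns the resolvent in \emph{incoming} regions; the correct tool there is Egorov's theorem together with Proposition~\ref{prop-BzoneS}.

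The genuine gap is in your definition $A_T^\delta(h) := U_h(T)\Opw(\tilde Q)$ and the argument for its norm bound. This operator does \emph{not} satisfy $\limsup_h \|A_T^\delta(h)\|_{\Lc(L^2)} \leq \delta$. Your reasoning that on non-trapped trajectories ``the corresponding microlocal $L^2$ mass disperses'' is false: in $L^2(\R^n)$, propagation to infinity does not decrease the norm. In the self-adjoint case $V_2=0$ the propagator is unitary and $\|U_h(T)\Opw(\tilde Q)\| = \|\Opw(\tilde Q)\| \approx 1$ for every $T$; the short-range absorption $V_2$ is integrable along escaping trajectories and gives no uniform decay either. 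The damping assumption \eqref{hyp-amfaible} only helps on $\O_b(\{E_0\})$.

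The paper fixes this by inserting a compact localization on the \emph{left} as well: it sets $A_T^\delta(h) = \Opw(\tilde q)\, U_h(T)\, \Opw(\tilde Q)$ with $\tilde q \in C_0^\infty(\R^{2n})$ equal to $1$ near $K$. Then Proposition~\ref{prop-super-egorov} applies directly and gives $\limsup_h \|A_T^\delta(h)\| \leq \delta$ for $T$ large: trapped trajectories are damped by $V_2$, and non-trapped ones have left $\supp \tilde q$ after time $T$, so $\Opw(\tilde q)$ kills them. Since $\Opw(q) = \Opw(q)\Opw(\tilde q) + O(h^\infty)$, this extra factor is harmless in the first estimate of the lemma. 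Once you make this correction, your ``Main obstacle'' paragraph becomes unnecessary---the bound is a direct citation of Proposition~\ref{prop-super-egorov}.
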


For the proof we follows the same general idea as in \cite{art-mesure}:

\begin{proof} \setcounter{stepproof} 0
\stepp
We consider $\tilde q \in C_0^\infty(\R^{2n})$ supported in $B_x(r)\cap p\inv(J)$ and equal to 1 on a neighborhood of $K$. Let $\th \in C_0^\infty(\R^n)$ be supported in $B(0,3\Rc)$ and equal to 1 on $B(0,2\Rc)$. Let $\o_- \in \Sc_0 (\R^{2n})$ be equal to 1 on $\zoneS_-\left(2\Rc,4 \nu_0 ,-\s_5\right)$ and supported in $\zoneS_-(\Rc,3\nu_0 ,- \s_4)$.
Let $q \in C_0^\infty(\R^{2n})$ be supported in $K$, $h\in ]0,h_0]$, $T \geq 0$ and $z \in\C_{I,+}$ ($h_0 > 0$ was fixed small enough in the introduction). Since $z$ is not in the spectrum of $\hh$ we can consider $(\hh-z)\inv S_h \in H^2(\R^n)$ and write:
\begin{eqnarray} 
\lefteqn{\Opw(q) (\hh-z)\inv S_h - \Opw(q) \uh(T,z)(\hh-z)\inv S_h} \label{dec-tps-gds}\\
\nonumber && = \frac ih \int_0^\infty \Opw(q) \big(\h_{T,R}(t) - \h_{0,R}(t) \uh(T,z) \big) \uh(t,z) S_h \, dt \\
\nonumber && = \frac ih \int_0^\infty  \Opw(q) \big(\h_{T,R}(t) - \h_{0,R}(t) \uh(T,z) \big)  \uh(t,z) S_h^R \, dt  + \bigo h 0 (\sqrt h),
\end{eqnarray}
where the rest is estimated in $L^2(\R^n)$ uniformly in $R \geq R_0$ but not in $T$ (see Proposition \ref{prop-source-loin}).\\

\stepp
We have
\begin{eqnarray*} 
\lefteqn{\Opw(q) \uh (T,z) (\hh -z)\inv S_h=  \Opw(q) \uh (T,z) \Opw(Q) (\hh -z)\inv S_h}\\
&\hspace{2cm} & \quad + \Opw(q) \uh (T,z)\Opw(1-Q) \th(x)   (\hh -z)\inv S_h\\
&& \quad + \Opw(q) \uh (T,z) \Opw(1-Q) (1-\th(x)) \Op (\o_-) (\hh -z)\inv S_h \\
&& \quad + \Opw(q) \uh (T,z) \Opw(1-Q) (1-\th(x)) \Op (1-\o_-) (\hh -z)\inv S_h.
\end{eqnarray*}
The second term of the right-hand side is of size $O(\sqrt h)$ uniformly in $z \in \C_{I,+}$ according to Proposition \ref{prop-loc-E0}. Let $\o \in \symb_0(\R^{2n})$ be supported in $\zoneS_-(2 R_1,\n_0,-\s_2)$ and equal to 1 in $\zoneS_-(3R_1, 2\n_0 , -\s_3)$. According to Propositions \ref{prop-microloc++}, \ref{prop-microloc-inf} and \ref{prop-incoming} we have
\begin{eqnarray*}
\lefteqn{\nr{\Opw(q) \uh(T,z) \Opw(1-Q) (1-\th(x)) \Op (\o_-) (\hh-z)\inv S_h}}\\
&& \leq c_{q,T} \nr{\Op (\o_-) (\hh-z)\inv (1-\Op(\o)) S_h }_{L^{2,-\d}(\R^n)} + \bigo h 0 \big( \sqrt h \big)\\
&& = \bigo h 0 \big( \sqrt h \big) ,
\end{eqnarray*}
uniformly in $z \in \C_{J,+}$ ($\Vert \Opw(q) \uh(T,z) \pppg x ^\d \Vert= O(1)$ uniformly in $z \in \C_{J,+}$) but not in $T$. Finally the last term is of size $O(h^\infty)$ according to Egorov's Theorem and Proposition \ref{prop-BzoneS} if $T \geq T_0$, $T_0$ being given by Proposition \ref{prop-BzoneS} applied with $\s_5$.
We consider $\tilde q, \tilde Q \in C_0^\infty(\R^{2n},[0,1])$ supported in $p\inv(J)$ and equal to 1 respectively on a neighborhood of $\supp q$ and $\supp Q$, and we set
\[
A_T^\d(z,h) = \Opw(\tilde q) \uh(T,z) \Opw(\tilde Q).
\]
According to Proposition \ref{prop-super-egorov} we have
\[
\limsup_{h\to 0} \sup_{z\in \C_{I,+}} \nr{A_T^\d(z,h)}_{\Lc(L^2(\R^n))} \leq \d
\]
when $T \geq T_0$, if $T_0$ was chosen large enough. We finally obtain
\[
\Opw(q) \uh (T,z) (\hh -z)\inv S_h =  \Opw(q) A_T^\d(z,h) \Opw(Q) (\hh-z)\inv S_h + \bigo h 0 (\sqrt h)
\]
in $L^2(\R^n)$ where the size of the rest is uniform in $z \in \C_{I,+}$.\\

\noindent
{\bf 3.}
For $h \in ]0,h_0]$ and $T \geq T_0$, we can take the limit $z \to E_h$ in \eqref{dec-tps-gds} ($E_h \in \bar {\C_{I,+}}$ if $h_0$ is small enough). This gives in $L^2(\R^n)$:
\[
\Opw(q) u_h = \Opw(q) \uhtr - \Opw(q) \uhe(T) \uhor + \Opw(q) A_T^\d(E_h,h) \Opw(Q) u_h + \bigo h 0 (\sqrt h).
\]
Let $q_1 \in C_0^\infty(\R^{2n},[0,1])$ be equal to 1 on a neighborhood of $\singl{\vf^t(z,\x) , t \in [0,1], (z,\x) \in N_E^{R_0}\G}$.
Using the results about the contribution of small times (see in particular Corollary 4.4 in \cite{art-mesure}), we know that
\begin{align*}
 \limsup_{h \to 0} \nr{\Opw(q) \uhe(T) u_h^{0,R_0}}_{L^2(\R^n)}
& =  \limsup_{h \to 0} \nr{\Opw(q) \uhe(T) \Opw(q_1^2) u_h^{0,R_0}}_{L^2(\R^n)} \\
& \leq C \limsup_{h \to 0} \nr{\Opw(q_1) \uhe(T) \Opw(q_1)}_{\Lc(L^2(\R^n))},
\end{align*}
for some constant $C$ which does not depend on $h$, $T$ or $R$ . According to Proposition \ref{prop-super-egorov}, this limit is less than $\d \nr q _\infty$ for any $T \geq T_0$ if $T_0$ was chosen large enough. Since for any $T \geq T_0$ we have
\[
 \vf^{-T}(\supp q) \cap \singl{\vf^t(z,\x) , t \in [0,1], (z,\x) \in N_E^{R}\G \setminus N_E^{R_0}\G} = \emptyset
\]
Egorov's Theorem also gives
\[
  \limsup_{h \to 0} \nr{\Opw(q) \uhe(T) (\uhor - u_h^{0,R_0})}_{L^2(\R^n)} =0,
\]
which concludes the proof.
%
%
%
%
%
%
\end{proof}

Then Proposition \ref{tps-grands} is proved exactly as in \cite{art-mesure} (see Proposition 5.5), and we can show existence of a semiclassical measure:


\begin{proposition} \label{prop-lim-mes}
There exists a non-negative Radon measure $\m$ on $\R^{2n}$ such that for $q \in C_0^\infty(\R^{2n})$ we have
\begin{equation} \label{double-lim}
\int_{\R^{2n}} q \, d\m =  \lim_{T,R \to +\infty}  \int_{\R^{2n}} q \, d\m_T^R
\end{equation}
and
\[
\innp{\Opw(q) u_h}{u_h} \limt h 0 \int_{\R^{2n}} q \, d\m.
\]
\end{proposition}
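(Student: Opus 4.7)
The plan is to combine Proposition~\ref{tps-grands} with the already known convergence $\innp{\Opw(q) u_h^{T,R}}{u_h^{T,R}} \to \int q \, d\mu_T^R$ as $h \to 0$ in order to obtain, simultaneously, the joint limit \eqref{double-lim} and the convergence of $\innp{\Opw(q) u_h}{u_h}$. Fix $q \in C_0^\infty(\R^{2n})$. Applying Proposition~\ref{prop-loc-E0} together with a cutoff $\th \circ p$ localized near $E_0$, one may dispose of the part of $q$ supported away from $p^{-1}(\{E_0\})$ (whose contribution is $o(1)$ for $\innp{\Opw(\cdot) u_h}{u_h}$ and vanishes for $\int \cdot \, d\mu_T^R$ in the same way), and so I may assume $\supp q \subset B_x(r) \cap p^{-1}(J)$ for some $r > 0$. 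Let $R_0$ be the radius given by Proposition~\ref{prop-source-loin} for this $r$.

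First I would show that $(T, R) \mapsto \int q \, d\mu_T^R$ is Cauchy as $T, R \to +\infty$. Given $\e > 0$, Proposition~\ref{tps-grands} supplies a threshold $T_0$ such that for all $T \geq T_0$ and $R \geq R_0$,
\[
\limsup_{h \to 0} \bigl| \innp{\Opw(q) u_h}{u_h} - \innp{\Opw(q) u_h^{T,R}}{u_h^{T,R}} \bigr| \leq \e \nr{q}_\infty.
\]
Since $\innp{\Opw(q) u_h^{T,R}}{u_h^{T,R}} \to \int q \, d\mu_T^R$ as $h \to 0$, this rewrites as
\[
\limsup_{h \to 0} \Bigl| \innp{\Opw(q) u_h}{u_h} - \int q \, d\mu_T^R \Bigr| \leq \e \nr{q}_\infty.
\]
Applied to two admissible pairs $(T, R)$ and $(T', R')$ and combined along a common sequence $h_m \to 0$ realizing both bounds up to an arbitrarily small excess, the triangle inequality gives
\[
\Bigl| \int q \, d\mu_T^R - \int q \, d\mu_{T'}^{R'} \Bigr| \leq 2 \e \nr{q}_\infty.
\]
Since $\e$ is arbitrary, the net is Cauchy and the limit $\Lambda(q) := \lim_{T, R \to +\infty} \int q \, d\mu_T^R$ exists in $\R$.

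Next I would check that $\Lambda : C_0^\infty(\R^{2n}) \to \R$ is a non-negative, locally bounded linear functional, so that by the Riesz--Markov theorem it is represented by a non-negative Radon measure $\mu$. Linearity and non-negativity ($q \geq 0 \impl \Lambda(q) \geq 0$) are inherited from the $\mu_T^R$. For the local bound, I would apply the above Cauchy estimate with $\e = 1$ to fix $T_0, R_0$ depending on $K = \supp q$: then $|\Lambda(q)| \leq \bigl|\int q \, d\mu_{T_0}^{R_0}\bigr| + \nr{q}_\infty \leq (\mu_{T_0}^{R_0}(K) + 1)\, \nr{q}_\infty$ for every $q$ supported in $K$. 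This produces the measure $\mu$ satisfying the first identity in \eqref{double-lim}.

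Finally, to obtain $\innp{\Opw(q) u_h}{u_h} \to \int q \, d\mu$ as $h \to 0$, for any $\e > 0$ choose $T \geq T_0$ and $R \geq R_0$ so that additionally $\bigl|\int q \, d\mu_T^R - \int q \, d\mu\bigr| \leq \e \nr{q}_\infty$, and combine with the displayed limsup estimate to obtain $\limsup_{h \to 0} |\innp{\Opw(q) u_h}{u_h} - \int q \, d\mu| \leq 2 \e \nr{q}_\infty$; the arbitrariness of $\e$ then gives the convergence. The main subtlety I anticipate is that the Cauchy estimate must be truly joint in $(T, R)$ rather than iterated: this is ensured precisely by the fact that the threshold $T_0$ in Proposition~\ref{tps-grands} depends on $K$ and $\e$ but is uniform in $R \geq R_0$.
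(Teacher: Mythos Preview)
Your argument is correct and essentially identical to the paper's own proof: both reduce to symbols supported in $p^{-1}(J)$, use Proposition~\ref{tps-grands} (together with the known convergence $\innp{\Opw(q) u_h^{T,R}}{u_h^{T,R}} \to \int q\, d\mu_T^R$) to obtain the $2\e\nr q_\infty$ Cauchy bound, extract the limit functional, apply Riesz, and then combine the two approximations for the final convergence. The only slip is the constant in your local bound (the Cauchy estimate gives $|\Lambda(q)-\int q\,d\mu_{T_0}^{R_0}|\leq 2\nr q_\infty$ rather than $\nr q_\infty$), which is harmless.
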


%

\begin{proof}
The result is clear outside $p\inv(\singl {E_0})$, so we focus on symbols supported in $p\inv(J)$. Let $K$ be a compact subset of $p\inv(J)$ and $\e > 0$. Let $T_0$ and $R_0$ given by Proposition \ref{tps-grands}. For $T_1,T_2 \geq T_0$, $R_1,R_2 \geq R_0$ and $q \in C_0^\infty(\R^{2n})$ supported in $K$ we have
\begin{eqnarray*}
\lefteqn{\abs{\int_{\R^{2n}} q \, d\m_{T_1}^{ R_1} -\int_{\R^{2n}} q \, d\m_{T_2}^{R_2}} }\\
&&= \lim_{h \to 0} \abs{\innp{\Opw(q) u_h^{T_1, R_1}} {u_h^{T_1, R_1}} -  \innp{\Opw(q) u_h^{T_2, R_2}} {u_h^{T_2, R_2}} } \leq 2 \e \nr q _\infty.
\end{eqnarray*}
This proves that $(T,R) \mapsto  \int q \,d\m_T^{ R}$ has a limit at infinity, which we denote by $L(q)$. The map $q \mapsto \int q \,d\m_T^{R}$ is a nonnegative linear form on $C_0^\infty(\R^{2n})$ for all $T,R > 0$, and hence so is $q \mapsto L(q)$. Let $T_0$ be as above for $\e = 1$ and $C_K$ be a constant such that for all $q \in C_0^\infty(\R^{2n})$ we have
\[
\abs{\int_{\R^{2n}} q \, d\m_{T_0}^{ R_0}} \leq C_K \nr q _\infty.
\]
Then we have
\begin{align*}
\abs{L(q)}
& \leq  \abs{ L(q) - \int_{\R^{2n}} q \, d\m_{T_0}^{R_0}} + \abs{\int_{\R^{2n}} q \, d\m_{T_0}^{R_0}} \\
& \leq  \lim_{T,R \to +\infty} \abs{ \int_{\R^{2n}} q \, d\m_{T}^{R} - \int_{\R^{2n}} q \, d\m_{T_0}^{R_0}} + C_K \nr q _\infty \\
& \leq (2+C_K) \nr q_\infty,
\end{align*}
which proves that the linear form $L$ on $C_0^\infty(\R^{2n})$ can be extended as a continuous linear form on the space of continuous and compactly supported functions on $\R^{2n}$. The first assertion is now a consequence of Riesz' Lemma. And the second can be proved as in \cite{art-mesure}, using the fact that $\innp{\Opw(q) u_h} {u_h}$ is close to $\innp{\Opw(q) u_h^{T,R}} {u_h^{T,R}}$ in the sense of Proposition \ref{tps-grands}, and $\innp{\Opw(q) u_h^{T,R}} {u_h^{T,R}}$ goes to $\int q \, d\m_{T}^R$ as $h$ goes to 0.
\end{proof}

It is now easy to prove the remark about the measures $\tilde \m _R$ mentionned in introduction:

\begin{proposition} \label{prop-lim-mes++}
Let $r > 0$, $R_0$ given by Proposition \ref{prop-source-loin} and $R \geq R_0$. Then the measures $\m$ and $\tilde \m _R$ coincide on $B_x(r)$.
\end{proposition}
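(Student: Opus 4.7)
The plan is to identify $\m$ with $\tilde \m_R$ on $B_x(r)$ by routing through the intermediate measures $\m_T^R$ introduced in Proposition \ref{prop-lim-mes}. From that proposition, $\m$ is recovered as the double limit of $\m_T^R$ when $T,R \to \infty$. On the other hand, since $A_R$ is compactly supported on $\G$, the already known case of Theorem \ref{th-mesure-nb} applied to $\tilde u_h^R$ (as carried out in \cite{art-mesure,art-nondiss}) characterizes $\tilde \m_R$ as the limit of $\m_T^R$ when only $T \to \infty$, for fixed $R$. So the whole point is to interchange these two limits locally on $B_x(r)$, and this will be made possible by the $R$-uniformity of Proposition \ref{tps-grands}.

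Fix $q \in C_0^\infty(\R^{2n})$ with $\supp q \subset B_x(r)$. Since both $\m$ and $\tilde \m_R$ are supported on $p\inv(\singl{E_0})$ (Proposition \ref{prop-loc-E0} applied to $u_h$ and to $\tilde u_h^R$), it is enough to treat the case $\supp q \subset B_x(r) \cap p\inv(J)$. Given $\e > 0$, Proposition \ref{tps-grands} furnishes $T_0 \geq 0$ such that for every $T \geq T_0$ and every $R \geq R_0$,
\[
\limsup_{h\to 0} \abs{\innp{\Opw(q)u_h}{u_h} - \innp{\Opw(q)\uhtr}{\uhtr}} \leq \e \nr q _\infty.
\]
Passing to the limit $h \to 0$ and using the definitions of $\m$ (Proposition \ref{prop-lim-mes}) and of $\m_T^R$, this becomes
\[
\abs{\int_{\R^{2n}} q \, d\m - \int_{\R^{2n}} q \, d\m_T^R} \leq \e \nr q _\infty \qquad (T \geq T_0,\; R \geq R_0).
\]
Now fixing $R \geq R_0$ and letting $T \to \infty$ gives $\int q \, d\m_T^R \to \int q \, d\tilde \m_R$ by the known case, whence $\abs{\int q \, d\m - \int q \, d\tilde \m_R} \leq \e \nr q _\infty$. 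Since $\e$ is arbitrary, the two integrals coincide, and since $q$ was an arbitrary test function supported in $B_x(r)$, we conclude that $\m = \tilde \m_R$ on $B_x(r)$.

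The only delicate point is the order of limits. The measure $\m$ is defined by letting $h \to 0$ directly in $\innp{\Opw(q)u_h}{u_h}$, with no truncation $T$ or $R$ ever present, while $\tilde \m_R$ is produced by fixing $R$, sending $h \to 0$, and only then sending $T \to \infty$. The $R$-uniform estimate in Proposition \ref{tps-grands} (itself relying on the $R$-uniformity of Proposition \ref{prop-source-loin}) is exactly what allows $h$ to be sent to $0$ before committing to a particular $R$; without this uniformity, the two limits could not be directly compared at fixed $R$.
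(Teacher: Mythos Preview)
Your proof is correct and follows essentially the same approach as the paper: the paper observes that $u_h^{T,R}$ is a good approximation both of $u_h$ (by Proposition \ref{tps-grands}) and of $\tilde u_h^R$ (by the compactly supported case), so both quantities have the same limit; you phrase this at the level of measures by sandwiching $\m_T^R$ between $\m$ and $\tilde\m_R$ and using the $R$-uniformity of Proposition \ref{tps-grands}. Your write-up is more detailed, but the argument is the same.
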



To prove this assertion, we only have to apply Proposition \ref{tps-grands} with $u_h$ and $\tilde u_h^R$. Let $q \in C_0^\infty(\R^{2n})$ be supported in $B_x(r) \cap p\inv(J)$. Since for large $T$ and small $h$ the quantity $\innp{\Opw(q) u_h^{T,R}}{u_h^{T,R}}$ is a good approximation both for $\innp{\Opw(q) u_h}{u_h}$ and $\innp{\Opw(q) \tilde u_h^{R}}{\tilde u_h^{R}}$, these two quantities have the same limit as $h$ goes to 0. \\

It only remains to prove the properties given in Theorem \ref{th-mesure-nb}:

%
%
%
%
%
%

%
%

\begin{proof}
The first two properties are direct consequences of Propositions \ref{prop-loc-E0} and \ref{prop-mes-zoneS}.
Let $r > 0$ and $R_0$ given by Proposition \ref{prop-source-loin}. Property (c) is already known for the compactly supported amplitude $A_{R_0}$, and hence for any $q \in C_0^\infty(\R^{2n})$ we have
\[
\int_{\R^{2n}} (-H_p + 2 \Im \tilde E + 2 V_2) q \, d\tilde \m_{R_0} = \pi (2\pi)^{d-n}\int_{\negg} q(z,\x)   \abs{A_{R_0}(z)}^2 \abs \x \inv \abs{\hat S (\x)}^2 \, d\snegg(z,\x).
\]
Now assume that $q$, and hence $(H_p + 2 \Im \tilde E + 2V_2)q$, are supported in $B_x(r)$. According to Proposition \ref{prop-lim-mes++} we have
\[ 
\int_{\R^{2n}} (-H_p + 2 \Im \tilde E + 2 V_2) q \, d\tilde \m_{R_0} = \int_{\R^{2n}} (-H_p + 2 \Im \tilde E + 2 V_2) q \, d\m.
\]
On the other hand, according to Proposition \ref{prop-non-entr} trajectories comming from the points of $\negg \setminus N_E^{R_0}\G$ never reach $B_x(r)$, so we also have
\[
 \pi (2\pi)^{d-n}\int_{\negg} q(z,\x)  (\abs{A(z)}^2 - \abs{A_{R_0}(z)}^2) \abs \x \inv \abs{\hat S (\x)}^2 \, d\snegg(z,\x) = 0.
\]
This proves that Property (c) holds when $q \in C_0^\infty(\R^{2n})$ is supported in $B_x(r)$. Since this holds for any $r > 0$, the theorem is proved.
\end{proof}

\appendix

\section{Short review about differential geometry} \label{sec-II}

We briefly recall in this appendix the basic results of differential geometry we have used. Detailed expositions can be found for instance in \cite{docarmo} and \cite{spivak4}.\\

Let $M$ be a differential manifold. We denote by $\vfield(M)$ the set of vector fields on $M$. An affine connection on $M$ is a mapping
\[
 \nabla : \left\{ \begin{array}{ccc} \vfield (M) \times \vfield (M) &\to& \vfield(M) \\ (X,Y) & \mapsto & \nabla_X Y \end{array} \right.
\]
which satisfies the following properties (for $X,Y,Z \in \vfield (M)$ and $f,g \in C^\infty (M)$):
\begin{enumerate}[(i)]
 \item $\nabla _{fX+gY}Z = f \nabla _X Z + g \nabla _Y Z$,
\item $\nabla _X(Y+Z) = \nabla _X Y + \nabla _X Z$,
\item $\nabla _X (fY) = f \nabla _X Y + (X\cdot f) Y$. \\
\end{enumerate}

The Levi-Civita connection on $M$ is the unique connection $\nabla$ on $M$ which is
\begin{enumerate}[(i)]
 \item symmetric:
\[
 \forall X,Y \in \vfield (M) , \quad \nabla _X Y - \nabla _Y X = [X,Y] = XY-YX,
\]
\item and compatible with the Riemannian metric:
\begin{equation} \label{der-innp}
\forall X,Y,Z \in \vfield(M), \quad X \cdot \innp{Y}{Z} = \innp{\nabla _X Y } Z_M + \innp Y {\nabla _X Z}_M.
\end{equation}
\end{enumerate}

The Levi-Civita connection on $\R^n$ endowed with the canonical metric is the usual differential.
%
Now let $\G$ be a submanifold of $\R^n$, endowed by the Riemannian structure defined by restriction of the scalar product of $\R^n$. For $X,Y \in \vfield (\G)$, $z \in \G$, and $\bar X ,\bar Y \in \vfield (\R^n)$ such that $X= \bar X$ and $Y = \bar Y$ in a neighborhood of $z$ in $\G$ we set
\[
 \nabla_X ^\G Y (z)= \big ( \nabla_{\bar X}^{\R^n} \bar Y (z)\big) _z^T.
\]
This definition does not depend on the choice of $\bar X $ or $\bar Y$ and defines the Levi-Civita connexion on $\G$.\\

Let $X \in \Xc(\G)$. The divergence $\divg X (z)$ at point $z \in \G$ is defined as the trace of the linear map $Y \mapsto \nabla_Y^\G X (z)$ on $T_z \G$. If $X \in \Xc(\G)$ and $f \in C^\infty(\G)$ then we have
\begin{equation} \label{div-fX}
 \divg(fX) = X \cdot f + f \divg X.
\end{equation}

The main theorem we have used in Section \ref{sec-terme-source} is the following:

\begin{theorem}[Green's Theorem] \label{th-green} 
If $M$ is an oriented Riemannian manifold and $X \in \Xc(M)$ is compactly supported, then 
\[
\int_M \divg X \, dV_M = 0,
\]
where $dV_m$ denotes the volume element on $M$.
\end{theorem}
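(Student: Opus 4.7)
The plan is to reinterpret $(\divg X)\,dV_M$ as an exact top-degree form and then invoke Stokes' theorem. Let $\omega = dV_M$ be the Riemannian volume form on the $n$-manifold $M$. The first step is to establish the identity
\[
(\divg X)\, \omega = \Lc_X \omega,
\]
where $\Lc_X$ denotes the Lie derivative along $X$. This is the classical differential-geometric characterization of the divergence and is consistent with the trace definition $\divg X(z) = \Tr(Y \mapsto \nabla_Y^M X(z))$ adopted earlier in the appendix: in a local orthonormal frame $(Y_1,\dots,Y_n)$, both expressions reduce to $\sum_j \innp{\nabla_{Y_j} X}{Y_j}_M$, using the compatibility property \eqref{der-innp} of the Levi-Civita connection.

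The second step is to apply Cartan's magic formula
\[
\Lc_X \omega = d(\iota_X \omega) + \iota_X(d\omega).
\]
Since $\omega$ is a form of top degree on an $n$-manifold, $d\omega = 0$, so
\[
(\divg X)\, \omega = d(\iota_X \omega).
\]
Because $X$ is compactly supported in $M$, so is the $(n-1)$-form $\iota_X \omega$. Stokes' theorem on the oriented manifold $M$ (without boundary) then gives
\[
\int_M \divg X \, dV_M \;=\; \int_M d(\iota_X \omega) \;=\; 0.
\]

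The only point requiring some care is the initial identification of the connection-theoretic divergence with the logarithmic derivative of the volume form under the flow of $X$; once this equivalence is granted, Cartan's formula and Stokes' theorem finish the argument mechanically. In an actual write-up inside this short appendix I would not reproduce the verification in detail, but simply refer the reader to \cite{docarmo,spivak4} for the Lie-derivative characterization of the divergence and for Stokes' theorem, since the appendix is intended only as a brief reminder of standard facts.
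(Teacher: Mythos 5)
Your argument is correct and is the standard one: identifying $(\divg X)\,dV_M$ with $\Lc_X (dV_M)$, writing this as $d(\iota_X\, dV_M)$ via Cartan's formula, and applying Stokes' theorem to the compactly supported $(n-1)$-form. The paper itself gives no proof of this statement --- it is recalled in the appendix as a standard fact with a pointer to \cite{docarmo,spivak4} --- so there is nothing to compare against; your proof is exactly what those references supply, and your explicit remark that $M$ is taken without boundary (consistent with how the theorem is applied in Section \ref{sec-terme-source}, where the vector field is compactly supported inside an orientable chart) is the one point worth making precise.
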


We finally recall the basic properties of the second fundamental form on $\G$. Given $X,Y \in \Xc(\G)$ and $z \in\G$ we set
\[
 \II _z (X,Y) = \nabla ^{\R_n} _{\bar X} \bar Y(z) - \nabla ^{\G}_ {X}  Y(z) = \left( \nabla ^{\R_n} _{\bar X} \bar Y(z) \right) _z^N \in N_z \G,
\]
where $\bar X$ and $\bar Y$ are extensions of $X$ and $Y$ on a neighborhood of $z$ in $\R^n$. We can check that $\II_z(X,Y)$ is well-defined and actually only depends on $X(z)$ and $Y(z)$. Moreover the bilinear form $\II_z$ is symmetric.

\subsection*{Acknowledgement} This work is partially supported by the French National Research Project NOSEVOL, No. ANR 2011 BS01019 01, entitled {\em Nonselfadjoint operators, semiclassical analysis and evolution equations}.

\bibliographystyle{alpha}
\bibliography{bibliotex}

\vspace{1cm}

\begin{center}

\begin{minipage}{0.8 \linewidth}

\noindent  {\sc Institut de mathématiques de Toulouse}

\noindent  {\sc 118, route de Narbonne}

\noindent  {\sc 31062 Toulouse Cédex 9}

\noindent  {\sc France}

 \begin{verbatim}
julien.royer@math.univ-toulouse.fr
\end{verbatim}
\end{minipage}
\end{center}

\end{document}